\documentclass[a4paper,11pt]{article}
\usepackage[margin=3cm]{geometry}
\date{April 20, 2026}

\usepackage[utf8]{inputenc}
\usepackage[T1]{fontenc}
\usepackage{soulutf8}
\sethlcolor{yellow}
\usepackage{authblk}

\usepackage{mathtools, amsthm, amsfonts, amssymb}
\usepackage{amsmath}
\usepackage{mathrsfs}
\usepackage{enumitem}
\usepackage{titlesec}
\usepackage{graphicx}
\usepackage{mathabx}
\usepackage{esint}
\usepackage{bbm}
\usepackage{bigints}
\usepackage[dvipsnames]{xcolor}
\usepackage[toc,page]{appendix}
\usepackage{comment}
\usepackage{multirow}

\usepackage{tikz}
\usetikzlibrary{decorations.pathreplacing}

\numberwithin{equation}{section}

\DeclareMathOperator*{\fiint}{\ensuremath{\iint\text{\kern-1.36em{\raisebox{5.87pt}{\rotatebox{-93}{$\setminus$}}}}}}

\allowdisplaybreaks

\usepackage[colorlinks=true,linkcolor=blue,citecolor=blue,urlcolor=blue,breaklinks]{hyperref}

\newtheorem{theorem}{Theorem}[section]

\newtheorem{lemma}[theorem]{Lemma}

\definecolor{darkgreen}{rgb}{0.0, 0.5, 0.0}

\def\tr{{\rm tr \,}}

\def\N{{\mathbb N}}
\def\Z{{\mathbb Z}}
\def\R{{\mathbb R}}
\def\C{{\mathbb C}}

\def\1{{\mathds{1}}}

\def\cE{{\mathcal E}}

\def\cB{{\mathcal B}}

\def\bR{{\mathbf R}}

\newcommand*{\Tr}{\mathrm{tr}}

\newcommand*{\rank}{\mathrm{Rank}}

\newcommand{\Supp}{{\mathrm{Supp}\,}}

\DeclarePairedDelimiterX\Set[1]\{\}{
   
   #1
}

\newcommand{\interior}[1]{
 {\kern0pt#1}^{\mathrm{o}}
}

\newlist{primenumerate}{enumerate}{1}
\setlist[primenumerate,1]{label={(H\arabic*$'$})}
\title{Eigenvalue asymptotics of Müller minimizers\\ for atoms and molecules}

\author[1,2,3]{Rupert L. Frank\thanks{\texttt{Emails: r.frank@lmu.de, longmeng@zju.edu.cn, nam@math.lmu.de, h.s@lmu.de}}}
\affil[1]{Institute of Mathematics, Ludwig Maximilian University Munich, Theresienstra\ss e 39, 80333 M\"unchen, Germany}
\affil[2]{Munich Center for Quantum Science and Technology, Schellingstra\ss e 4, 80799 M\"unchen, Germany}
\affil[3]{Department of Mathematics, Caltech, Pasadena, CA 91125, USA}
\author[4]{Long Meng}
\affil[4]{Center for Interdisciplinary Applied Mathematics \& Institute of Fundamental and
Transdiciplinary Research, Zhejiang University, China} 
\author[1,2]{Phan Thành Nam}
\author[1,2]{Heinz Siedentop}

\begin{document}

\maketitle

\abstract{
We study the spectral properties of minimizers of the M\"uller functional for atoms and molecules with $N$ electrons and total nuclear charge $Z$. We prove that under some suitable assumptions on $Z$ and $N$, the $k$-th eigenvalue of a M\"uller minimizer $\gamma_*$ behaves as $A_* k^{-8/3}$ when $k\to \infty$, with a constant $A_*>0$ determined explicitly by the density of $\gamma_*$. In particular, in the atomic case $V=Z|x|^{-1}$ our assumption holds if $Z$ is sufficiently large and $N\le Z- C_0 Z^{1/3}$. While our proof is inspired by Sobolev's work on the asymptotic behavior of the one-particle density matrix of Schr\"odinger ground states \cite{sobolev2022eigenvalue2}, the analysis in M\"uller theory requires several new ingredients concerning both the singular behavior of the integral kernel of the minimizers near the diagonal and the decay properties at infinity.}

\section{Introduction and main result}

M\"uller \cite{Mueller-84} (see also \cite{BuiBae-02}) has introduced an energy functional of the \emph{one-particle density matrix} (1-pdm) that has been widely used in computational quantum chemistry as an alternative to Hartree--Fock theory with an improved treatment of the exchange contribution. In this setting, $N$-electron systems with spin $q\in \mathbb{N}$ are described by self-adjoint operators $\gamma$ on $L^2(\R^3:\mathbb{C}^q)$ satisfying $0\leq\gamma\leq 1$ and $\Tr\gamma =N$. They are called \emph{one-particle density matrices}. To each $\gamma$ we can associate an integral kernel $\gamma(x,y)$ and a density $\rho_{\gamma}(x)=\Tr_{\C^q}[\gamma(x,x)]$ (the latter formula is formal but can be made precise by spectral decomposition, as explained below). 

The corresponding energy functional for a molecule with $N$ electrons, each of charge $-1$, and $K$ nuclei with fixed locations $\{\bR_j\}_{j=1}^K \subset \R^3$ and charges  $\{Z_j\}_{j=1}^K \subset (0,\infty)$, in units such that $\hbar^2/m=1$, is given by 
\begin{equation}
    \cE^{\rm M}(\gamma) := \tr(-\tfrac12\Delta \gamma)-\tr(V\gamma)+D(\rho_{\gamma},\rho_{\gamma})-X(\gamma^{1/2})
\end{equation}
with  the Coulomb potential
\begin{equation*}
    V(x):=\sum_{j=1}^K \frac{Z_j}{|x-\bR_j|},
\end{equation*}
the direct energy
\begin{equation*}
    D(\rho,\mu):=\frac{1}{2}\int_{\R^3}\int_{\R^3} \frac{\overline{\rho(x)}\mu(y)}{|x-y|}dxdy
\end{equation*}
and the exchange term 
\begin{equation*}
    X(\gamma):=\frac{1}{2}\int_{\R^3}\int_{\R^3}\frac{|\gamma (x,y)|^2}{|x-y|}dxdy.
\end{equation*}
The ground state energy in M\"uller theory is given by
\begin{equation}
\label{eq:muellermin}
E^{\rm M}(N):= \inf \left\{ \cE^{\rm M}(\gamma)\,|\, 0\le \gamma\le 1\text{ on } L^2(\R^3:\C^q),\, \Tr \gamma=N \right\}.
\end{equation}
The number $N$ stands for the number of electrons, but from a mathematical point of view the minimization problem defining $E^{\rm M}(N)$ makes sense for any positive real number $N$, and our results are valid in this generality.

Note that the only difference between the M\"uller energy functional $\cE^{\rm M}(\gamma)$ and the standard Hartree--Fock energy functional is that the standard exchange term $X(\gamma)$ is replaced by $X(\gamma^{1/2})$, where $\gamma^{1/2}$ is the (operator) square root of $\gamma$. Like the Hartree--Fock energy, the ground state energy $E^{\rm M}(N)$ also captures the Thomas--Fermi, Scott, and Dirac--Schwinger contributions \cite{Siedentop-09} for large $N$ and $Z$. But unlike the Hartree--Fock energy, the mapping $\gamma \mapsto \cE^{\rm M}(\gamma)$ is convex, leading to many remarkable properties that are not available in Hartree--Fock theory, including the uniqueness of the density $\rho_\gamma$ when a M\"uller minimizer $\gamma$ exists \cite{FraLieSeiSie-07}.

Concerning the existence of M\"uller miminizers, it was shown in \cite{FraLieSeiSie-07} that the minimization problem $E^{\rm M}(N)$ has a minimizer if $N\le Z=\sum_{j=1}^K Z_j$. On the other hand, in the atomic case $K=1$, it was proved in \cite{FraNamBos-18,FraNamBos-18b} that there is a critical electron number $N_c(Z) \le Z+ C$, with a universal constant $C>0$, such that \eqref{eq:muellermin} has no minimizers if $N>N_c(Z)$ (see also related results in the Hartree--Fock theory  \cite{Solovej-03} and the Thomas--Fermi--Dirac--Weizs\"acker theory \cite{FraNamVdB-18}).

In the present paper, we are interested in the eigenvalues of M\"uller minimizers. A crucial difference between M\"uller theory and Hartree--Fock theory is that every M\"uller minimizer $\gamma_*$ has infinitely many non-zero eigenvalues \cite{FraLieSeiSie-07}. To be precise, it was shown in \cite[Proposition 7]{FraLieSeiSie-07} that ${\rm Ran}(\tr_{\C^q}\gamma)=L^2(M_{\gamma_*})$ with $M_{\gamma_*}:=\{\rho_{\gamma_*} \neq 0\}$ and, clearly, $L^2(M_{\gamma_*})$ is infinite dimensional.

Our goal in this paper is to establish an asymptotic formula for the eigenvalues of M\"uller minimizers. Our main result provides the asymptotic behavior $A_* k^{-8/3}$ with a constant $A_*>0$ given explicitly in terms of the density $\rho_{\gamma_*}$. Remarkably, the decay exponenent $-\frac 83$ is the same as in the case of the 1-pdm of atomic ground states of the full Schr\"odinger Hamiltonian, thus reflecting the true quantum behavior. This $k^{-3/8}$-law in Schr\"odinger theory has been established recently by Sobolev \cite{sobolev2022eigenvalue2}, motivated by questions raised in quantum chemistry; see, e.g., \cite{cioslowski2019universalities,cioslowski2020off}. The constant $A_*$ is not universal and depends on the details of the system. The fact that it is strictly positive may be contrasted with the Schr\"odinger case  \cite{sobolev2022eigenvalue2}, where the constant may vanish in certain exceptional cases.

The precise statement of our main result is as follows.

\begin{theorem}[Eigenvalue asymptotics]\label{th:asymptotic} 
Consider the molecular case with total nuclear charge $Z=\sum_{j=1}^K Z_j$. Let $\gamma_*=\gamma_*(N)$ be a minimizer for the minimization problem in \eqref{eq:muellermin} and let 
\begin{equation}\label{eq:MF-operator-1body}
    k_{\gamma_*}:=-\frac{1}{2}\Delta- V+\rho_{\gamma_*(N)}*|\cdot|^{-1}
    \quad\text{in}\ L^2(\R^3:\C) \,.
\end{equation}
 We assume that $N\leq Z$ is chosen such that $k_{\gamma_*}$ has at least $N$ eigenvalues below $-1/2$, counted with multiplicity.  
Then the eigenvalues $1 \ge \lambda_1 \ge  \lambda_2 \ge.... \ge 0$ of  $\gamma_*$, counted with multiplicity, satisfy the asymptotic formula 
   \begin{equation} \label{eq:ev-asymptotics}
  \lim_{k\to \infty}k   \lambda_k^{3/8}= \frac{\sqrt{2}}{3 \pi^{5/4}} \int_{\R^3} \rho_{\gamma_*}(x)^{3/4}dx.
\end{equation}
In particular, \eqref{eq:ev-asymptotics} holds in the atomic case $V(x)=Z|x|^{-1}$ provided that $N\le Z - C_0 Z^{1/3}$ with a constant $C_0>(3/q)^{1/3}$ and $Z$ sufficiently large. 
\end{theorem}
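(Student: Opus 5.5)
The plan is to reduce the eigenvalue asymptotics of $\gamma_*$ to those of a pseudodifferential-type operator, in the spirit of Sobolev's treatment of Schr\"odinger one-particle density matrices. The starting point is the Euler--Lagrange equation for M\"uller minimizers \cite{FraLieSeiSie-07}.

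\textbf{Step 1: Euler--Lagrange equation.} Writing $\gamma_*^{1/2}$ for the square root, the first-order condition for convex minimization with a multiplier $\mu\le 0$ formally reads
\[
 k_{\gamma_*} - \mu - \frac{\delta X(\gamma^{1/2})}{\delta \gamma}\Big|_{\gamma_*} = 0 \quad\text{on } \overline{\mathrm{Ran}\,\gamma_*},
\]
with the constraint $\gamma_*\le 1$ saturated only on the spectral subspace where the left side would be negative. The hypothesis that $k_{\gamma_*}$ has at least $N$ eigenvalues below $-1/2$ is what I expect to force the constraint $\gamma_*\le1$ to be inactive (no eigenvalue equal to $1$). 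It should also make $\mu$ strictly negative, uniformly away from the essential spectrum. Then the relation $\gamma_*^{1/2}\,(k_{\gamma_*}-\mu) + (k_{\gamma_*}-\mu)\gamma_*^{1/2} = $ (exchange operator with kernel $\gamma_*^{1/2}(x,y)/|x-y|$), suitably symmetrized, holds on all of $L^2$. The derivative of $\gamma\mapsto X(\gamma^{1/2})$ is expressed through the integral representation of the square root.

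\textbf{Step 2: Representation of $\gamma_*^{1/2}$.} Inverting the positive operator $k_{\gamma_*}-\mu\ge c>0$, I express $\gamma_*^{1/2}$ as a solution of a Sylvester-type equation,
\[
 \gamma_*^{1/2}=\int_0^\infty e^{-t(k-\mu)}\, W_*\, e^{-t(k-\mu)}\,dt, \qquad W_*(x,y)=\frac{\gamma_*^{1/2}(x,y)}{|x-y|}.
\]
The eigenvalues of $\gamma_*$ are the squares of the singular values of $\gamma_*^{1/2}$. Hence $\lambda_k^{3/8}\sim$ (singular values)$^{3/4}$, and I need $s_k(\gamma_*^{1/2})\sim B k^{-4/3}$.

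\textbf{Step 3: Local singularity of the kernel.} The key analytic input is that the kernel $\gamma_*^{1/2}(x,y)$ is smooth off the diagonal but has a specific singularity on it. I would show $\gamma_*^{1/2}(x,x)=\rho^{1/2}$-like behavior, namely $\gamma^{1/2}_*(x,y)\approx a(x)+O(|x-y|)$, with $\mathrm{Tr}_{\C^q}$ of the diagonal related to $\rho_{\gamma_*}^{1/2}$. I am not sure the diagonal of $\gamma^{1/2}$ is exactly $\rho^{1/2}$. Rather, I expect the leading singular part of $W_*$ to be $a(x)/|x-y|$, and through Step 2 this gives a principal symbol $\sim a(x)\,4\pi|\xi|^{-2}/(\tfrac12|\xi|^2)^{\cdot}$. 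Convolving the two heat factors gives $\int_0^\infty e^{-t|\xi|^2}dt\cdot 4\pi a/|\xi|^2 \sim 4\pi a(x)|\xi|^{-4}$. Birman--Solomyak asymptotics for operators $b(x)\,|\xi|^{-4}$ then give singular values $\sim k^{-4/3}$, with constant $\propto(\int |b|^{3/4})^{4/3}$. Matching with the stated constant $\frac{\sqrt2}{3\pi^{5/4}}\int\rho^{3/4}$ determines $a$ in terms of $\rho_{\gamma_*}$. I expect the relation $\gamma_*^{1/2}(x,x)$ with $\gamma_*^{1/2}\gamma_*^{1/2}=\gamma_*$ to yield this, via $a(x)$ being the diagonal of $\gamma_*$ itself, i.e. $\rho_{\gamma_*}(x)$ up to spin factors.

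\textbf{Step 4: Remainder and decay at infinity.} I must show that the remainder $W_*-a(x)|x-y|^{-1}$, passed through Step 2, has singular values $o(k^{-4/3})$. I also need Sobolev-type weighted estimates (Birman--Solomyak for $b(x)\,h(-i\nabla)$ on $\R^3$ with $b\in L^{3/4}$ plus decay) to handle noncompactness. This requires exponential decay of $\gamma_*^{1/2}(x,y)$ in $x,y$, which follows from $\mu<0$ via Agmon-type estimates applied to the equation in Step 1. The near-diagonal regularity, i.e.\ showing that the remainder kernel is Lipschitz/$C^{1,\alpha}$ with integrable weights, is the main obstacle. I would attack it by bootstrapping the Sylvester equation in Sobolev/H\"older spaces near the nuclei, cutting out small balls around each $\bR_j$, where $|x-\bR_j|^{-1}$ breaks smoothness, and controlling them by a limiting argument. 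Finally, for the atomic case, the condition $N\le Z-C_0Z^{1/3}$ is checked by a Thomas--Fermi comparison: the screened potential in $k_{\gamma_*}$ then still has at least $N$ eigenvalues below $-1/2$, using that $Z-N\gtrsim Z^{1/3}$ leaves enough hydrogen-like bound states per spin state ($q$).
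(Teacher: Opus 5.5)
Your outline has the right overall shape: Euler--Lagrange equation, a cusp $-\frac12|x-y|$ on the diagonal, Birman--Solomyak asymptotics for kernels $A(x)|x-y|B(y)$, and decay to control the non-compact part. However, the step that actually uses the hypothesis is missing. You assert that the hypothesis on $k_{\gamma_*}$ excludes occupation numbers equal to $1$ and gives $k_{\gamma_*}-\mu\ge c>0$, which your Sylvester formula needs. You then assert that exponential decay follows from $\mu<0$ by Agmon estimates. None of this is argued, and $\mu<0$ does not put $2\mu$ below the essential spectrum of $H_{\gamma_*}$:
\begin{itemize}
\item A priori one only knows $\mu\le-\frac18$, i.e.\ $2\mu\le-\frac14$. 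The pair part $-\frac12\Delta_x-\frac12\Delta_y-|x-y|^{-1}$ alone already produces essential spectrum $[-\frac14,\infty)$. As the paper stresses, $2\mu$ may lie in the essential spectrum or at its threshold.
\item The one-body operator $h_{\gamma_*}$ contains the unbounded exchange operator $\mathfrak X_{\gamma_*}$.
\item Fully occupied orbitals cannot be ruled out; the paper keeps them, with $e_j\le0$.
\end{itemize}
The paper uses the hypothesis only to prove $\mu<-\frac12$. It perturbs $\gamma_*\to\gamma_*+t|u\rangle\langle u|$, with $u$ in the span of the first $J$ eigenfunctions of $k_{\gamma_*}$ and orthogonal to the fully occupied orbitals. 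It then combines $\mu\le\lim_{t\downarrow0}t^{-1}(E^{\rm M}(N+t)-E^{\rm M}(N))$ with homogeneity and concavity of $\gamma\mapsto X(\gamma^{1/2})$ to get $\mu<\sigma_J(k_{\gamma_*})<-\frac12$. Decay then comes from a non-standard argument:
\begin{itemize}
\item test the two-body equation against $e^{2f_\delta(x)}\overline{\Phi}$;
\item put \emph{all} of $-|x-y|^{-1}$ on the $x$-variable, using $-\frac12\Delta_x-|x-y|^{-1}\ge-\frac12$;
\item handle the $y$-variable through the one-body equation $h_{\gamma_*,y}\Phi=\mu\Phi+\sum_{\lambda_j=1}e_j\psi_j\psi_j^*$, whose leftover term $\langle e^{f_\delta}\Phi,\mathfrak X_{\gamma_*,y}e^{f_\delta}\Phi\rangle$ is nonnegative.
\end{itemize}
This yields $e^{2\kappa|x|}\rho_{\gamma_*}\in L^1$ for $\kappa^2<-\mu-\frac12$. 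The same variational bound with $J=1$ would give $\mu<\inf\sigma(k_{\gamma_*})$, which is exactly the positivity your Sylvester route needs, but you have no argument that $J=1$.

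The regularity part targets the wrong scale. $C^{1,\alpha}$ control of the remainder only yields $s_k=O(k^{-1/2-(1+\alpha)/3})$ via Birman--Solomyak, which is not $o(k^{-4/3})$. Hölder norms do not see that the singularity lives on the codimension-three diagonal. What is needed is $L^2$-based Besov regularity in one variable of order $>5/2$; locally, $|x-y|$ itself lies in $B^{5/2}_{2,\infty}$ and in no $B^{s}_{2,\infty}$ with $s>5/2$. The paper obtains this from the two-body equation:
\begin{itemize}
\item $\Phi\in B^{5/2}_{2,\infty}(\R^6)$;
\item after dividing by the Jastrow factor $e^{F}$, which removes the nuclear and interelectronic cusps at once, $\Psi=e^{-F}\Phi\in B^{7/2}_{2,\infty}(\R^6)$, together with exponentially weighted versions;
\item $\Xi=e^{-\frac12\theta(|x-y|)}\Psi+\frac12|x-y|\Psi(x,x)$ is then locally in $L^2(\R^3_x;B^{7/2}_{2,\infty}(\R^3_y))$, giving an $\mathfrak S_{3/5,\infty}$ remainder.
\end{itemize}
The nuclear factors $e^{-Z\theta(|x|)}$, $e^{-Z\theta(|y|)}$ enter only as multiplication operators, so nothing has to be cut out near the nuclei.

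Your choice of the constant is reverse-engineered. Your own heuristic correctly gives $a(x)=\gamma_*^{1/2}(x,x)$, i.e.\ $\Phi(x,y)\approx\Phi(x,x)\bigl(1-\frac12|x-y|\bigr)$. The identity $\gamma_*^{1/2}\gamma_*^{1/2}=\gamma_*$ only gives $\rho_{\gamma_*}(x)=\int|\Phi(x,y)|^2dy$, not $\Phi(x,x)=\rho_{\gamma_*}(x)$.

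Your doubt here is justified. The paper's last step likewise replaces $\Tr_{\C^q}\Phi(x,x)$ by $\rho_{\gamma_*}(x)$ without argument. But $\Tr_{\C^q}\Phi(x,x)-\rho_{\gamma_*}(x)=\sum_j(\lambda_j^{1/2}-\lambda_j)|\psi_j(x)|^2\ge0$ has integral $\Tr\gamma_*^{1/2}-N>0$. So what that argument actually delivers, at least for $q=1$, is the constant with $\gamma_*^{1/2}(x,x)$ in place of $\rho_{\gamma_*}(x)$.

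For the atomic case no Thomas--Fermi input is needed. Radiality and Newton's theorem give $k_{\gamma_*}\le-\frac12\Delta-(Z-N)|x|^{-1}$, and one counts hydrogen levels.
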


The precise statement in the atomic case is that for every $C_0>(3/q)^{1/3}$ there is a $Z_C>0$ such that the result holds for all $Z\geq Z_C$ and $N\leq Z - C_0 Z^{1/3}$.

As can be seen from the proof, the two conditions $V(x)=Z|x|^{-1}$ and $N\le Z - C_0 Z^{1/3}$ in the atomic case  are both used to prove the spectral condition for $k_{\gamma_*}$, which in turn ensures  the exponential decay $e^{2\kappa |\cdot|} \rho_{\gamma_*} \in L^1(\R^3)$ for a constant $\kappa>0$. If we can derive this decay property by other means, then the asymptotic formula \eqref{eq:ev-asymptotics} remains correct even in the molecular case.

In the remainder of this introduction we explain the main ingredients in the proof of Theorem \ref{th:asymptotic}. In order to understand the eigenvalues of $\gamma_*$, we focus on the spectral properties of $\gamma_*^{1/2}$, which solves a suitable Euler--Lagrange equation \cite{FraLieSeiSie-07}. Inspired by Sobolev's argument \cite{sobolev2022eigenvalue2}, the eigenvalue asymptotics of $\gamma_*^{1/2}$ will be derived from a detailed description of the non-smooth behavior of the kernel
$$
\Phi(x,y)=(\gamma_*)^{1/2}(x,y)
$$ 
near the diagonal ${x=y}$, together with the exponential decay of $\Phi$ at infinity. We will discuss these two issues separately in the following.


\subsection*{Regularity of M\"uller minimizers}

By the spectral decomposition, every minimizer M\"uller $\gamma_*$ can be rewritten as
\begin{equation}\label{eq:gamma-dec}
    \gamma_*=\sum_{j=1}^\infty \lambda_j\left|\psi_j\right>\left<\psi_j\right|
\end{equation}
with the eigenfunctions $(\psi_j)_{j\geq 1}$ being an orthonormal basis for $L^2(\R^3:\C^q)$, and with the eigenvalues $1\ge \lambda_1\ge \lambda_2\ge ... \ge 0$ satisfying  $\sum_j \lambda_j=\Tr \gamma=N$. Then the density $\rho_{\gamma_*}$ can be defined properly as 
$$
\rho_{\gamma_*} (x) = \sum_{j=1}^\infty \lambda_j |\psi_j(x)|^2 \,.
$$
We note that the integral kernel of $\gamma_*$ is given by 
\begin{equation*}
\gamma_*(x,y)=\sum_{j=1}^\infty \lambda_j\psi_j(x)\psi_j^*(y)\in \C^{q\times q}
\end{equation*}
and, similarly, the integral kernel of the operator $\gamma_*^{1/2}$ is given by 
\begin{equation*}
      \Phi(x,y) :=  (\gamma_*)^{1/2}(x,y)=\sum_{j=1}^\infty \lambda_j^{1/2}\psi_j(x)\psi_j^*(y)\in \C^{q\times q} \,.
\end{equation*}
Whenever a M\"uller minimizer $\gamma_*$ exists, the corresponding function $\Phi(x,y)=(\gamma_*)^{1/2}(x,y)$ satisfies the following Euler--Lagrange equation \cite{FraLieSeiSie-07}
    \begin{equation}\label{eq:two-body}
     ( H_{\gamma_*}-2\mu)\Phi(x,y)=\sum_{\lambda_j=1}2e_j \psi_j(x)\psi_j^*(y)
    \end{equation}
with the two-body mean-field operator  
\begin{equation}\label{eq:MF-operator-2body}
    H_{\gamma_*}:=  -\frac{1}{2}\Delta_x -\frac{1}{2}\Delta_y-\phi_{\gamma_*}(x)-\phi_{\gamma_*}(y)-\frac{1}{|x-y|}
\end{equation}
and the screened potential 
$$
\phi_\gamma(x) :=V(x)-\int_{\R^3}\rho_{\gamma}(y)|x-y|^{-1}dy.
$$
In the two-body equation \eqref{eq:two-body}, the chemical potential $\mu$ on the left-hand side satisfies $\mu\leq -\frac{1}{8}$, while the coefficients $e_j$ on the right-hand side obey $e_j\leq 0$ for all $j$ and $e_j=0$ if $\lambda_j<1$. 

A crucial ingredient in the proof of Theorem \ref{th:asymptotic} will be a precise regularity result for $\Phi$. We note that the potential that enters the equation for $\Phi$ is singular on the sets 
\begin{equation}
    \label{eq:singularset}
    \{ (x,x):\ x\in\R^3\} \cup \bigcup_{j=1}^K \left(\left(\{\mathbf R_j\}\times\R^3 \right) \cup \left(\R^3\times\{\mathbf R_j\} \right) \right).
\end{equation}
These singularities will lead to non-smooth behavior of $\Phi$ on this set and it is precisely this non-smooth behavior of $\Phi$ that will lead to the rather slow decay of the eigenvalues of the corresponding operator $\gamma_*^{1/2}$. More precisely, we will see that the non-smoothness near the diagonal $\{(x,x): x\in\R^3\}$ is  the most relevant. 

One of our results concerns the regularity of $\Phi$ in the scale of Sobolev spaces $H^s(\R^d:\C^{q\times q})$. A more precise result, which will be crucial in our proof of Theorem \ref{th:asymptotic}, concerns the regularity in the scale of Besov spaces $B_{2,\infty}^s(\R^d:\C^{q\times q})$. Let us recall their definition. We set $(\Delta_{h} u)(z):= u(z+h)-u(z)$ for $z\in \R^d$ and $h\in \R^d$. Then for $l\in \mathbb{N}$,
\begin{equation}\label{eq:Delta-h-l}
\begin{split}
     \Delta_{h}^{(l)}u(z)&= \sum_{j=0}^l (-1)^{j+1}\begin{pmatrix}
        l\\j
    \end{pmatrix}u(z+jh)\\
    &=\int_0^1\int_0^1\cdots\int_0^1 (h\cdot \nabla)^l u(z+\sum_{j=1}^l s_jh) ds_1ds_2\cdots ds_l.
\end{split}
\end{equation}
The Besov space $B_{2,\infty}^s(\R^d:\C^{q\times q})$ for $s>0$ consists of $u\in L^2(\R^d:\C^{q\times q})$ such that for some integer $l>s$ we have
\begin{equation*}
  \|u\|_{B_{2,\infty}^s(\R^d:\C^{q\times q})}= \|u\|_{L^2(\R^d:\C^{q\times q})}+  \sup_{h\neq 0}|h|^{-s}\|\Delta_h^{(l)} u\|_{L^2(\R^d:\C^{q\times q})}<\infty.
\end{equation*}
It is known that different choices of $l$ with $l>s$ lead to equivalent norms.

In addition to a sharp regularity result for $\Phi$, we show that one can `extract' from $\Phi$ an approximation to the non-smooth behavior near the singular set \eqref{eq:singularset} and obtain an improved regularity result. We take a function $ \theta\in C^\infty(\R:[0,1])$ satisfying 
\begin{equation}\label{eq:theta}
    \theta(t)=t \quad{\rm for } \quad t\in [-1/2,1/2],\qquad {\rm and} \qquad \theta=1 \quad{\rm for} \quad|t|\geq 1,
\end{equation}
and define the Jastrow factor by 
\begin{equation}\label{eq:Jastrow}
    F(x,y):=-\sum_{j=1}^KZ_j\theta(|x-\bR_j|)-\sum_{j=1}^KZ_j\theta(|y-\bR_j|)-\frac{1}{2}\theta(|x-y|) \,.
\end{equation}
Our result on the regularity of $\Phi$ is as follows.

\begin{theorem}[Two-body regularity]\label{th:regularity1} 
    Let $\gamma_*$ be a minimizer for \eqref{eq:muellermin} in the molecular case. Then $\Phi(x,y)=\gamma_*^{1/2}(x,y)$, the integral kernel of $\gamma_*^{1/2}$, satisfies 
$$\Phi\in H^{2+\alpha}(\R^6:\C^{q\times q}) \cap B^{5/2}_{2,\infty}(\R^6:\C^{q\times q})$$
for any $0\leq \alpha<\frac{1}{2} $. Moreover, the function $\Psi(x,y)=e^{-F(x,y)}\Phi(x,y)$, with $F$ the Jastrow factor in \eqref{eq:Jastrow}, satisfies the improved regularity
$$\Psi \in H^{3+\alpha}(\R^6:\C^{q\times q})\cap B^{7/2}_{2,\infty}(\R^6:\C^{q\times q})$$
for any $0\leq \alpha<\frac{1}{2}$.
\end{theorem}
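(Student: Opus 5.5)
We treat the Euler--Lagrange equation \eqref{eq:two-body} as an elliptic equation in $\R^6$ for $\Phi$ and bootstrap the regularity. Rewrite it as
\begin{equation*}
\left(-\tfrac12\Delta_{\R^6}+1\right)\Phi = W\Phi + (1+2\mu)\Phi + R, \qquad W(x,y):=\phi_{\gamma_*}(x)+\phi_{\gamma_*}(y)+\frac{1}{|x-y|},
\end{equation*}
where $R=\sum_{\lambda_j=1}2e_j\psi_j(x)\psi_j^*(y)$ is a finite sum. Each $\psi_j$ with $\lambda_j=1$ is an eigenfunction of a one-body operator with Coulomb singularities, so it lies in $H^{2+\alpha}(\R^3)$ for $\alpha<1/2$ and in $B^{5/2}_{2,\infty}(\R^3)$, by the standard one-body analogue of the argument below. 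Consequently $R$ is at least as regular as we need for $\Phi$. For $\Psi$, the term $R$ requires its own Jastrow extraction, which works in the same way; in the atomic-type setting one may also note that $R$ vanishes when no $\lambda_j=1$.

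\textbf{Regularity of $\Phi$.} First, $\Phi\in L^2(\R^6)$ because $\|\Phi\|_2^2=\Tr\gamma_*=N$. Finiteness of the kinetic energy gives $\Phi\in H^1$. Next, $W$ is a sum of Coulomb singularities of codimension $3$ in $\R^6$, so $W$ is $-\Delta$-bounded with relative bound $0$, and hence $\Phi\in H^2$. To go further, we estimate $W\Phi$ in $H^{\alpha}$ for $\alpha<1/2$ and in $B^{1/2}_{2,\infty}$. Multiplication by $|x-y|^{-1}$, or by $|x-\bR_j|^{-1}$ (note that $\rho*|\cdot|^{-1}$ is smoother), maps $H^{2}(\R^6)\to B^{1/2}_{2,\infty}$. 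The reason is that in the variable transverse to the singular set, $|z|^{-1}$ on $\R^3$ lies locally in $B^{1/2}_{2,\infty}$ and just fails to lie in $H^{1/2}$. We verify this by computing $\|\Delta_h(|z|^{-1}\chi)\|_2\sim|h|^{1/2}$ and using a product estimate. Concretely, we write $W\Phi = \nabla\cdot(\ldots)$-type terms or use the Hardy inequality on differences. Elliptic regularity then gives $\Phi\in H^{2+\alpha}\cap B^{5/2}_{2,\infty}$.

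\textbf{Jastrow factor and $\Psi$.} The function $F$ is chosen so that near the singular set $\Delta F$ produces the singular potential. Using $\Delta_{\R^3}|z|=2/|z|$, we get $\tfrac12\Delta_{\R^6}F=-W_{\rm sing}+(\text{bounded terms})$ near each singular piece. Substituting $\Phi=e^{F}\Psi$ into the equation yields
\begin{equation*}
-\tfrac12\Delta\Psi-\nabla F\cdot\nabla\Psi = \left(\tfrac12\Delta F+\tfrac12|\nabla F|^2+W+2\mu\right)\Psi+e^{-F}R .
\end{equation*}
The Coulomb singularities cancel, leaving a potential that is bounded, with a Lipschitz-type but nonsmooth structure. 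The drift $\nabla F$ is bounded, with jumps of the type $z/|z|$, which lie in $B^{3/2}_{2,\infty}$ locally in the transverse $\R^3$. From the first step, $\Psi\in H^{2+\alpha}$. The right-hand side $\nabla F\cdot\nabla\Psi$ plus the potential term then lies in $H^{1+\alpha}\cap B^{3/2}_{2,\infty}$ (the limiting regularity is that of $z/|z|$ times a function). Elliptic regularity gives $\Psi\in H^{3+\alpha}\cap B^{7/2}_{2,\infty}$.

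\textbf{Main obstacle.} The hardest step is the sharp product estimates in Besov spaces: showing that multiplication by $|z|^{-1}$, respectively by $z/|z|$ and the residual terms such as $|\nabla F|^2$ (containing cross terms like $\frac{x-\bR_j}{|x-\bR_j|}\cdot\frac{x-y}{|x-y|}$), gains exactly the endpoint $B^{s}_{2,\infty}$ regularity. This must work uniformly where singular sets intersect, e.g.\ near $x=y=\bR_j$, without losing the endpoint. I would handle this with a partition of unity separating the singular pieces. Near intersections, I would work with difference quotients $\Delta_h^{(l)}$ directly: I would split $h$ into components and use a Hardy inequality $\int|z|^{-2}|u|^2\lesssim\int|\nabla u|^2$ in the transverse variables, together with the explicit bound $\|\Delta_h(\chi\,z/|z|)\|_{L^2(\R^3)}\lesssim|h|^{3/2}$.
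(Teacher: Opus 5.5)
Your overall strategy matches the paper's:
\begin{itemize}
\item Hardy's inequality takes you from $H^1$ to $H^2$.
\item The scaling bound $\|\Delta_h|\cdot|^{-1}\|_{L^2(\R^3)}\lesssim|h|^{1/2}$, together with the difference product rule, puts the Coulomb terms times $\Phi$ in $B^{1/2}_{2,\infty}$.
\item The Jastrow substitution cancels the Coulomb singularities and leaves bounded drift and potential terms whose gradients are bounded by inverse distances, after which you bootstrap.
\end{itemize}

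The step that does not hold up as justified is the regularity of $R=\sum_{\lambda_j=1}2e_j\psi_j(x)\psi_j^*(y)$. You obtain $\psi_j\in H^{2+\alpha}\cap B^{5/2}_{2,\infty}$ from ``the standard one-body analogue''. However, the one-body equation \eqref{eq:Muller-eigen} involves $h_{\gamma_*}=-\tfrac12\Delta-\phi_{\gamma_*}-\mathfrak X_{\gamma_*}$. The nonlocal exchange operator $\mathfrak X_{\gamma_*}$ is unbounded, possibly not densely defined, and has no Sobolev or Besov mapping properties at hand. So no standard elliptic argument applies, and controlling $\mathfrak X_{\gamma_*}\psi_j$ would itself require the regularity of $\Phi$ you are trying to prove.

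The repair is elementary, and it is what the paper uses. For $\lambda_j=1$ one has $\psi_j=\gamma_*^{1/2}\psi_j$, i.e.\ $\psi_j(x)=\int\Phi(x,y)\psi_j(y)\,dy$. Hence $\|\psi_j\|_{H^s(\R^3)}\le\|\Phi\|_{H^s(\R^6)}$, and similarly for difference quotients. So at every stage of the bootstrap $R$ is automatically as regular as $\Phi$ already is. That is enough, since the elliptic step gains two derivatives while the Coulomb multipliers cost only $3/2$.

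Likewise, no ``separate Jastrow extraction'' for $R$ is needed. From $\psi_j\in H^2$, $|\nabla F|\lesssim 1$, $|\nabla^2 F|\lesssim$ a sum of inverse distances to the singular set, and Hardy's inequality, you get $e^{-F}R\in H^2\subset H^{1+\alpha}\cap B^{3/2}_{2,\infty}$.

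Three smaller points.
\begin{itemize}
\item The bound $\|\Delta_h(\chi\,z/|z|)\|_{L^2(\R^3)}\lesssim|h|^{3/2}$ cannot hold for the first difference $\Delta_h$: an $L^2$ function with $\|\Delta_h f\|_2=o(|h|)$ vanishes identically. It is true for $\Delta_h^{(2)}$, which is what membership in $B^{3/2}_{2,\infty}$ requires.
\item From $\Psi\in H^{2+\alpha}$ alone, the drift term $\nabla F\cdot\nabla\Psi$ lies only in $H^{1+\alpha}$, not in $B^{3/2}_{2,\infty}$. You therefore need two passes. First obtain $\Psi\in H^{3+\alpha}$. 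Then, with $\Psi\in H^3$ in hand, show that the gradients of the drift and residual potential terms lie in $B^{1/2}_{2,\infty}$ by the same difference argument, which gives $B^{7/2}_{2,\infty}$.
\item The difficulty you anticipate at intersections such as $x=y=\bR_j$ does not materialize. Each singular factor can be handled separately by global Hardy or Herbst inequalities in the variable transverse to it, so no partition of unity is needed.
\end{itemize}
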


The proof of this theorem is given in Section \ref{sec:regularity}.

We emphasize that Theorem \ref{th:regularity1} holds as soon as the M\"uller minimizer exists, without any additional assumption on $N$ or $Z= \sum_{j=1}^K Z_j$. The proof of Theorem \ref{th:regularity1} is based on the Euler--Lagrange equation \eqref{eq:two-body}. As we have already mentioned, the non-smoothness of $\Phi$ comes from the singularity of the potential $-\phi_{\gamma_*}(x) -\phi_{\gamma_*}(y)- |x-y|^{-1}$ of the two-body mean-field operator $H_{\gamma_*}$ introduced in \eqref{eq:MF-operator-2body} on the set \eqref{eq:singularset}. Hence, in some sense, the nature of the non-smoothness of $\Phi$ from M\"uller's equation \eqref{eq:two-body} is similar to that of the many-body Schr\"odinger equation \cite{sobolev2022eigenvalue2}. It is remarkable that the sign of the electron-electron interaction (which is negative in our case, while it is positive in the many-body Schr\"odinger equation) is irrelevant in this question.

The mathematical study of the Coulombic eigenfunctions goes back at least to Kato’s cusp condition \cite{Kato1957}. Using Jastrow factors, the regularity can be improved \cite{Soren-etal-SharpRegularity}; see also \cite{Soren-Ostergaard-derivatives} and references therein. Related results are available for the density and one-particle density matrix \cite{Peter-Sobolev-Analyticity,Soren-etal-density}. On the other hand, to circumvent the curse of dimension in numerical analysis, mixed regularity in a global Hilbert-space setting was proved in \cite{Yserentant3,Yserentant1} and further improved in \cite{meng2023mixed}.

In contrast to the above-mentioned works on regularity, we have to prove regularity of the M\"uller wavefunction in the scale of Besov spaces. To our knowledge, these regularity results in terms of Besov space are novel. In addition, our result is optimal in the sense that $\Phi$ (resp.~$\Psi$)  does not belong to $B^s_{2,\infty}(\R^6:\C^{q\times q})$ for $s>5/2$ (resp.~$s>7/2$) nor to $H^{t}(\R^{6})(\R^6:\C^{q\times q})$ for $t\geq 5/2$ (resp.~$t\geq 7/2$). We do not give a proof of these optimality relations, except for an indirect argument for the Besov-space regularity of $\Phi$. Indeed, if $\Phi$ would belong to $B^s_{2,\infty}(\R^6:\C^{q\times q})$ with $s>5/2$ and if $\rho_{\gamma_*}$ would satisfy the exponential decay in \eqref{eq:decayass}, then we could follow Step 1 in the proof of Theorem \ref{th:asymptotic} and deduce that $\gamma_*^{1/2}\in\mathfrak S_{6/(3+2s),\infty}$. This would mean that $\lambda_k = O(k^{-(3+2s)/3})$, which would contradict Theorem \ref{th:asymptotic} since the right side of \eqref{eq:ev-asymptotics} is strictly positive.


\subsection*{Decay of M\"uller minimizers}

Next, we consider the decay of $\rho_{\gamma_*}$, which turns out to be a serious challenge. It is difficult to use the two-body equation \eqref{eq:two-body} for two reasons:  (1) the chemical potential $\mu$ may lie in the essential spectrum of $H_{\gamma}$, or even at its threshold, which leads to the failure of the standard exponential decay argument; (2) \eqref{eq:two-body} is not an eigenvalue problem, and the right-hand side of \eqref{eq:two-body} complicates the analysis. In addition, we do not know the exact values of $e_j$ in \eqref{eq:two-body} when $\lambda_j=1$, which prevents bootstrap arguments. 

Therefore, it is natural to replace the two-body equation \eqref{eq:two-body} by one-body equations. It was shown in \cite{FraLieSeiSie-07} that \eqref{eq:two-body} implies the equations
\begin{equation}\label{eq:Muller-eigen}
    \begin{cases}
    h_{\gamma_*} \psi_j= \mu \psi_j & \text{if}\ \lambda_j<1\\
       h_{\gamma_*} \psi_j= (\mu+e_j) \psi_j & \text{if}\ \lambda_j=1.
\end{cases}
\end{equation}
with the same $\mu$ and $e_j$ as in \eqref{eq:two-body}. Here the one-body operator $h_{\gamma_*}$ is defined by
\begin{equation}\label{eq:onebodyh}
    h_{\gamma_*} :=-\frac{1}{2}\Delta-\phi_{\gamma_*} -\mathfrak{X}_{\gamma_*}
    \quad \text{in}\ L^2(\R^3:\C^q)
\end{equation}
with $\mathfrak{X}_{\gamma_*}$ the nonlocal exchange operator with matrix elements
\begin{equation*}
    \left<\psi_i,\mathfrak{X}_{\gamma_*} \psi_j\right>=(\sqrt{\lambda_i}+\sqrt{\lambda_j})^{-1}\left<\psi_i,Z_{\gamma_*}\psi_j\right>.
\end{equation*}
Equivalently, we can write 
\begin{equation}\label{eq:xchangeop}
    \mathfrak{X}_{\gamma_*}=\frac{1}{\pi}\int_0^\infty \frac{1}{\gamma+s}Z_{\gamma_*}\frac{1}{\gamma+s}\sqrt{s}ds
\end{equation}
with $Z_{\gamma_*}$ being the operator with $\C^q\times\C^q$-valued integral kernel
\begin{equation*}
    Z_{\gamma_*}(x,y):=\gamma_*^{1/2}(x,y) \, |x-y|^{-1}.
\end{equation*}

As $\rank(\gamma_*)=\infty$ and $\tr\gamma_*=N$, the number $|\{j\in \N; \lambda_j=1\}|\leq N-1$. Consequently, $\mu$ is an eigenvalue of $h_{\gamma_*}$ with infinite multiplicity. Unlike the standard Hartree--Fock exchange operator, $\mathfrak{X}_{\gamma_*}$ is not a bounded operator and may not even be densely defined. In fact, $\mathfrak{X}_{\gamma_*}$, consequently also $h_{\gamma_*}$, are only defined as unbounded operators on the orthogonal complement of $\ker(\gamma_*)$. (Indeed, for $\lambda_i=\lambda_j=0$ the matrix elements $\left<\psi_i,\mathfrak{X}_{\gamma_*} \psi_j\right>$ is not well-defined.)

In summary, without extra information on $\mu$, neither \eqref{eq:two-body} nor \eqref{eq:Muller-eigen} seems to be helpful for obtaining a good decay estimate for $\rho_{\gamma_*}$. Our key observation is that if $\mu<-1/2$, then such a decay can be obtained from these equations using the fact that $ \mathfrak{X}_{\gamma_*} \ge 0$ (the unboundedness is not an issue in this case). To be precise, we have the following conditional decay estimate. 

\begin{theorem}[Decay estimate]\label{th:decay} 
    Let $\gamma_*$ be a minimizer for \eqref{eq:muellermin} in the molecular case and assume that the chemical potential $\mu$ in \eqref{eq:two-body} satisfies $\mu<-\frac{1}{2}$. Then, for any $0<\kappa< \sqrt{-\frac{1}{2}-\mu}$, we have 
    \begin{equation*}
        \int_{\mathbb{R}^3} e^{2\kappa|x|}\rho_{\gamma_*}(x)dx<\infty.
    \end{equation*}
\end{theorem}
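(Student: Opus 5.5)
We prove the bound by an Agmon-type exponential weight argument applied to the one-body equations \eqref{eq:Muller-eigen}, using that $\mathfrak X_{\gamma_*}\ge 0$. Since $\rho_{\gamma_*}=\sum_j\lambda_j|\psi_j|^2$, it is equivalent to bound $\sum_j \lambda_j\|e^{\kappa|x|}\psi_j\|^2$. We do not weight each $\psi_j$ separately, since there are infinitely many. Instead we work with the Hilbert--Schmidt operator $\gamma_*^{1/2}$, i.e.\ with the two-body kernel $\Phi$, weighted only in the variable $x$. Indeed,
\[
\int e^{2\kappa|x|}\rho_{\gamma_*}(x)\,dx=\iint e^{2\kappa|x|}|\Phi(x,y)|^2\,dx\,dy=\|e^{\kappa|\cdot|}\gamma_*^{1/2}\|_{\mathfrak S_2}^2 .
\]
Equation \eqref{eq:Muller-eigen} says that $h_{\gamma_*}\gamma_*^{1/2}=\gamma_*^{1/2}(\mu+E)$, where $E=\sum_{\lambda_j=1}e_j|\psi_j\rangle\langle\psi_j|\le 0$. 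This makes sense because $\gamma_*^{1/2}$ maps into $\overline{\mathrm{Ran}\,\gamma_*}$, where $h_{\gamma_*}$ is defined.

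\textbf{Step 1 (regularized weights).} Let $w_\epsilon(x)=\kappa|x|/(1+\epsilon|x|)$. This weight is bounded and Lipschitz with $|\nabla w_\epsilon|\le\kappa$. The bounded multiplication operator $e^{w_\epsilon}$ makes all quantities finite, and we let $\epsilon\to0$ at the end by monotone convergence. Set $f_j=e^{w_\epsilon}\psi_j$.

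\textbf{Step 2 (IMS-type identity).} For each $j$ with $\lambda_j>0$ we test \eqref{eq:Muller-eigen} against $e^{2w_\epsilon}\psi_j$. Take the real part and use the standard identity
\[
\mathrm{Re}\langle e^{2w}\psi,-\tfrac12\Delta\psi\rangle=\tfrac12\|\nabla(e^{w}\psi)\|^2-\tfrac12\|\nabla w\, e^{w}\psi\|^2 .
\]
This gives
\[
\tfrac12\|\nabla f_j\|^2-\tfrac{\kappa^2}{2}\|f_j\|^2-\langle f_j,\phi_{\gamma_*}f_j\rangle-\mathrm{Re}\langle e^{2w_\epsilon}\psi_j,\mathfrak X_{\gamma_*}\psi_j\rangle\le\mu\|f_j\|^2 .
\]
We then multiply by $\lambda_j$ and sum over $j$. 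The key point is that the exchange contribution has a good sign. Summed against $\lambda_j$, it equals $\mathrm{Tr}(\gamma_*^{1/2}e^{2w_\epsilon}\gamma_*^{1/2}\mathfrak X_{\gamma_*})$, up to realness. Using the integral representation \eqref{eq:xchangeop}, this can be rewritten as $\frac1\pi\int_0^\infty\mathrm{Tr}\big(A_s^*Z_{\gamma_*}A_s\big)\sqrt s\,ds$ with $A_s=(\gamma_*+s)^{-1}\gamma_*^{1/2}e^{w_\epsilon}$-type factors, possibly after symmetrizing. Its nonnegativity would follow from $Z_{\gamma_*}\ge0$, since the kernel $\gamma_*^{1/2}(x,y)|x-y|^{-1}$ is a Schur product of two positive definite kernels. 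We therefore drop it, with the correct sign because it enters with a minus. If symmetrizing the weight between the two sides is awkward, the fallback is to weight symmetrically, $e^{w_\epsilon}\gamma_*^{1/2}e^{w_\epsilon}$-style. Alternatively one can use $\langle e^{2w}\psi_j,\mathfrak X\psi_j\rangle$ with the matrix-element formula $(\sqrt{\lambda_i}+\sqrt{\lambda_j})^{-1}\langle\psi_i,Z\psi_j\rangle$ and show that the resulting double sum is a positive form. I expect this sign argument for the weighted, nonlocal and unbounded $\mathfrak X_{\gamma_*}$ to be the main obstacle, together with justifying the interchange of sums.

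\textbf{Step 3 (conclusion).} After dropping the exchange term we obtain
\[
\sum_j\lambda_j\Big(\tfrac12\|\nabla f_j\|^2-\langle f_j,\phi_{\gamma_*}f_j\rangle\Big)\le\big(\mu+\tfrac{\kappa^2}{2}\big)\sum_j\lambda_j\|f_j\|^2 .
\]
We split $\phi_{\gamma_*}=\phi_{\gamma_*}\mathbbm 1_{|x|\le R}+\phi_{\gamma_*}\mathbbm 1_{|x|>R}$. For $|x|>R$ we have $\phi_{\gamma_*}\le\delta$, since $V\le Z/|x|$ decays and the subtracted term is nonnegative. On the bounded region, Hardy's inequality gives $-\phi\mathbbm1_{|x|\le R}\ge-\tfrac14(-\Delta)-C_R$, and this is harmless on the ball where $e^{w_\epsilon}\le e^{\kappa R}$. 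Write $\delta'=-\tfrac12-\mu-\kappa^2$. Since $\kappa^2<-\tfrac12-\mu$ we have $\delta'>0$, so $-\mu-\tfrac{\kappa^2}{2}-\delta\ge\tfrac12+\delta'-\delta>0$ for $R$ large, even without using the kinetic term; the assumption $\mu<-\tfrac12$ is what gives this margin, with room to spare. Hence $\sum_j\lambda_j\|f_j\mathbbm 1_{|x|>R}\|^2$ is bounded by a constant times $e^{2\kappa R}\big(N+\sum_j\lambda_j\|\nabla\psi_j\|^2\big)$, uniformly in $\epsilon$. The right-hand side is finite because $\mathrm{Tr}(-\Delta\gamma_*)<\infty$. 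Monotone convergence as $\epsilon\to0$ then gives $\int e^{2\kappa|x|}\rho_{\gamma_*}<\infty$.
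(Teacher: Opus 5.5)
There is a genuine gap, and it is a sign error rather than a technicality. Your Step 2 inequality reads $L_j-X_j\le\mu\|f_j\|^2$ with $X_j=\mathrm{Re}\langle e^{2w_\epsilon}\psi_j,\mathfrak X_{\gamma_*}\psi_j\rangle$. To discard $-X_j$ and keep an upper bound on $\sum_j\lambda_jL_j$, you need $\sum_j\lambda_jX_j\le 0$. The nonnegativity you try to establish goes the wrong way. Since $h_{\gamma_*}$ contains $-\mathfrak X_{\gamma_*}$, the exchange term is attractive, like $-\phi_{\gamma_*}$, and $\mathfrak X_{\gamma_*}\ge0$ works against you in a lower bound. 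So the inequality opening Step 3 does not follow.

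A telltale symptom is your remark that $\mu<-\frac12$ is used ``with room to spare''. Your Step 3 only needs $\mu+\frac{\kappa^2}{2}<0$. Since $\mu\le-\frac18$ always, the argument would give exponential decay for every minimizer and make the theorem's hypothesis superfluous. Separately, the summed term is $\mathrm{Re}\,\mathrm{Tr}(e^{2w_\epsilon}\mathfrak X_{\gamma_*}\gamma_*)$, not $\mathrm{Tr}(\gamma_*^{1/2}e^{2w_\epsilon}\gamma_*^{1/2}\mathfrak X_{\gamma_*})$. In the eigenbasis of $\gamma_*$ it carries the multiplier $\frac{\lambda_i+\lambda_j}{2(\sqrt{\lambda_i}+\sqrt{\lambda_j})}$. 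That is not a positive definite kernel, so your Schur-product argument does not fix its sign.

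The missing idea is the defining relation $\mathfrak X_{\gamma_*}\gamma_*^{1/2}+\gamma_*^{1/2}\mathfrak X_{\gamma_*}=Z_{\gamma_*}$. Equivalently, it is the difference between the two-body equation and the one-body equations in $x$ and $y$. Formally it gives
\begin{equation*}
\mathrm{Tr}\bigl(e^{2w_\epsilon}\mathfrak X_{\gamma_*}\gamma_*\bigr)=\iint \frac{e^{2w_\epsilon(x)}|\Phi(x,y)|^2}{|x-y|}\,dx\,dy-\mathrm{Tr}\bigl(\gamma_*^{1/2}e^{2w_\epsilon}\gamma_*^{1/2}\mathfrak X_{\gamma_*}\bigr)\le\iint \frac{e^{2w_\epsilon(x)}|\Phi(x,y)|^2}{|x-y|}\,dx\,dy ,
\end{equation*}
where the last trace is nonnegative. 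Combined with your Step 2, this yields $\langle e^{w_\epsilon}\Phi,(-\frac12\Delta_x-\phi_{\gamma_*}(x)-|x-y|^{-1})e^{w_\epsilon}\Phi\rangle\le(\mu+\frac{\kappa^2}{2})\|e^{w_\epsilon}\Phi\|^2$. The electron--electron term can only be absorbed by the whole kinetic energy in $x$, via $-\frac12\Delta_x-|x-y|^{-1}\ge-\frac12$. This is exactly where $\mu<-\frac12$ enters; the theorem's condition on $\kappa$ gives $\mu+\frac{\kappa^2}{2}<-\frac12$.

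This is the paper's proof in other clothing. The paper tests the two-body equation against $e^{2f_\delta(x)}\overline{\Phi(x,y)}$, with the weight in $x$ only. It then applies the one-body equation in the $y$ variable, which commutes with the weight. The exchange operator therefore appears as $+\langle e^{f_\delta}\Phi,\mathfrak X_{\gamma,y}e^{f_\delta}\Phi\rangle\ge0$ on the favorable side. The price is the $-|x-y|^{-1}$ term in the $x$-part, which is handled by the hydrogen bound.
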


For the proof of Theorem \ref{th:decay} we need to change the standard argument used to get exponential decay; see e.g. \cite[Appendix B]{RooSei-22}. Essentially, we integrate the two-body equation \eqref{eq:two-body} against $e^{2 \kappa |x|}\overline{\Phi(x,y)}$, with a suitable cut-off in the exponential weight $|x|$. The extra condition $\mu<-\frac{1}{2}$ allows us to treat the negative potential $-|x-y|^{-1}$ in \eqref{eq:two-body} in a non-balanced manner, making it possible to compare with the corresponding outcome from the one-body equation \eqref{eq:Muller-eigen}. The details will be given in Section \ref{sec:decay}.

In the last step, we still need to justify whether the extra condition $\mu<-1/2$ holds true. Recall that in general we only know the bound $\mu\le-\frac 18$. The latter bound is deduced in \cite{FraLieSeiSie-07} from the fact that 
\begin{equation*}
    \inf {\rm spec} \left(-\tfrac{1}{2}\Delta_x -\tfrac{1}{2}|x-y|^{-1} \right)\leq -\tfrac{1}{8} \,.
\end{equation*}
In the following, we improve the bound for the chemical potential by using the one-body mean-field operator $k_{\gamma_*}$ introduced in \eqref{eq:MF-operator-1body}. 

\begin{theorem}[Improved bound for the chemical potential]\label{th:mu} 
    Consider the molecular case with total nuclear charge $Z=\sum_{j=1}^K Z_j$. Assume that there is a minimizer $\gamma_*$ for \eqref{eq:muellermin} and let $1\geq \lambda_1\geq \lambda_2\cdots\geq 0$ be its eigenvalues and let $J:=\inf\{j\in \N:\; \lambda_j<1\}$.

    For every self-adjoint operator $h$ in $L^2(\R^3:\C^q)$ with
    $h\geq k_{\gamma_*} \otimes 1_{\C^q}$, if $h$ has  at least $J$ eigenvalues $\sigma_1(h) \le \sigma_2(h) \le \cdots \le \sigma_J(h) $  below its essential spectrum, with eigenfunctions $u_1,\ldots,u_J$, then 
    \begin{equation}\label{eq:mu-sigmaN}
        \mu\leq \sigma_J(h)-\tfrac{1}{2}\inf_{\substack{\|u\|_{L^2}=1\\ u\in {\rm Span}\{u_1,\ldots,u_J\}}} \int_{\R^3}\int_{\R^3} \frac{|u|^2(x)|u|^2(y)}{|x-y|}dxdy<\sigma_J(h).
    \end{equation}
    Consequently, the bound $\mu<-1/2$ holds in the atomic case  $V(x)=Z |x|^{-1}$ provided that $N\leq Z-C_0 Z^{1/3}$  with a constant $C_0> (3/q)^{1/3}$ and $Z$ sufficiently large. 
\end{theorem}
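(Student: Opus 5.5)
The plan is a variational comparison: build a trial perturbation of the minimizer $\gamma_*$ from a single orbital $u$ and read off an upper bound on $\mu$. We work with $\Phi=\gamma_*^{1/2}$ as the variable. The Müller energy is then a functional of a self-adjoint Hilbert--Schmidt kernel $\Phi$ with $\|\Phi\|_2^2=N$, and $\mu$ is the Lagrange multiplier for this constraint.

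\textbf{Step 1: second-order variational inequality.} We first establish the inequality behind the bound $\mu\le -\frac18$ in \cite{FraLieSeiSie-07}:
\begin{equation*}
2\mu\,\|W\|_2^2\le \langle W, H_{\gamma_*}W\rangle
\end{equation*}
for trial kernels $W$ that are orthogonal, in both variables, to the fully occupied orbitals $\psi_1,\dots,\psi_{J-1}$. The mechanism is as follows. Since $\lambda_J<1$ and $\gamma_*$ has infinite rank, one can transfer a small amount $\varepsilon$ of mass from $\Phi$ into the direction $W$ without violating $0\le\gamma\le 1$ or $\Tr\gamma=N$. For this one uses trial states whose square root is $\sqrt{1-\varepsilon}\,\Phi\oplus\sqrt{\varepsilon}\,W$ on nearly orthogonal subspaces, or alternatively compares $E^{\rm M}(N)$ with $E^{\rm M}(N-\varepsilon)+\varepsilon\cdot(\text{energy of }W)$ by convexity. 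To first order in $\varepsilon$, the energy cost is $\langle W,H_{\gamma_*}W\rangle/\|W\|^2$ and the gain is $2\mu$. The mean-field terms $-\phi_{\gamma_*}(x)-\phi_{\gamma_*}(y)$ come from the direct term, and $-|x-y|^{-1}$ comes from the cross term of $X(\gamma^{1/2})$.

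\textbf{Step 2: choice of trial state.} We take the rank-one kernel $W(x,y)=u(x)u(y)^*$ with $\|u\|=1$. A direct computation gives
\begin{equation*}
\langle W,H_{\gamma_*}W\rangle = 2\langle u,(k_{\gamma_*}\otimes 1)u\rangle-\int\!\!\int \frac{|u|^2(x)|u|^2(y)}{|x-y|}\,dx\,dy .
\end{equation*}
We choose $u\in{\rm Span}\{u_1,\dots,u_J\}$ orthogonal to $\psi_1,\dots,\psi_{J-1}$; such a nonzero $u$ exists by dimension counting. Then
\begin{equation*}
\langle u,(k_{\gamma_*}\otimes 1)u\rangle\le\langle u,hu\rangle\le\sigma_J(h).
\end{equation*}
Combining this with Step 1 and bounding the Coulomb self-interaction of this particular $u$ from below by the infimum over the span gives \eqref{eq:mu-sigmaN}. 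The strict inequality holds because that infimum is positive on a finite-dimensional space.

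\textbf{Step 3: atomic case.} This is the main obstacle. We construct an $h\ge k_{\gamma_*}$ with at least $N\ge J$ eigenvalues below $-\frac12$. Set $Q=Z-N\ge C_0Z^{1/3}$. Newton's theorem gives
\begin{equation*}
\phi_{\gamma_*}(x)\ge \frac{Z-\int_{|y|<|x|}\rho_{\gamma_*}(y)\,dy}{|x|}\ge \frac{Q}{|x|}.
\end{equation*}
Hence $k_{\gamma_*}\le -\frac12\Delta - Q|x|^{-1}$. This is the wrong direction for choosing $h$, but it is the right direction for min-max: we may simply take $h=k_{\gamma_*}\otimes1$. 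By min-max comparison, $k_{\gamma_*}\otimes 1$ has at least as many eigenvalues below $-\frac12$ as the hydrogen-like operator with charge $Q$. That operator has eigenvalues $-Q^2/(2n^2)$ with multiplicity $qn^2$, so there are about $qn_0^3/3$ eigenvalues below $-\frac12$, where $n_0\approx Q$.

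We then need $qQ^3/3\ge N$, up to lower-order errors. Since $N\le Z$, this holds for large $Z$ exactly when $C_0>(3/q)^{1/3}$. The delicate part is to track the lower-order error terms in this shell counting, for instance $\lfloor Q\rfloor$ versus $Q$, carefully enough that the strict inequality on $C_0$ absorbs them.
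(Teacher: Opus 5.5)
\textbf{There is a genuine gap, and it is in Step 1.} For $W=uu^*$, your inequality $2\mu\|W\|^2\le\langle W,H_{\gamma_*}W\rangle$ is equivalent to the bound you want to prove. So everything rests on its justification, and neither mechanism you offer supplies it.

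\emph{The direct-sum trial state fails.} $\sqrt{1-\varepsilon}\,\Phi\oplus\sqrt\varepsilon\,W$ requires $u$ to be orthogonal to the range of $\gamma_*$. That range is dense in $L^2(M_{\gamma_*})$, and $u$ is pinned to ${\rm Span}\{u_1,\dots,u_J\}$, since this is what gives $\langle u,hu\rangle\le\sigma_J(h)$. So you cannot move $u$ to a disjoint or far-away region. The bound $\mu\le-\frac18$ only involves the free operator $-\frac12\Delta_x-\frac12|x-y|^{-1}$, i.e.\ a trial piece decoupled from $\gamma_*$; that is why no $\phi_{\gamma_*}$ appears there. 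Separately, rescaling all of $\Phi$ also removes mass from the fully occupied orbitals, whose marginal value is $\mu+e_j$. Even in the orthogonal case you would therefore get $2\mu+\frac2N\sum_{\lambda_j=1}e_j$ on the left, not $2\mu$.

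\emph{The non-orthogonal amplitude perturbation also fails.} Take $\Phi_s=\Phi+sW$ with $s=\sqrt\varepsilon$. The linear terms are indeed absorbed by \eqref{eq:two-body} together with $u\perp\psi_1,\dots,\psi_{J-1}$. But the density acquires a cross term $s\sigma$ with $\sigma=2\,\mathrm{Re}\big(u\,\overline{\gamma_*^{1/2}u}\big)$, which is nonzero since $\int\sigma=2\langle u,\gamma_*^{1/2}u\rangle>0$. The Hartree term then contributes an extra $s^2D(\sigma,\sigma)>0$ at the same order as $\frac{s^2}{2}\langle W,H_{\gamma_*}W\rangle$. You only obtain $2\mu\le\langle W,H_{\gamma_*}W\rangle+2D(\sigma,\sigma)$, which is too weak.

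\textbf{The missing idea is to perturb $\gamma_*$ linearly, not its square root.} The paper takes $\widetilde\gamma(t)=\gamma_*+t|u\rangle\langle u|$ with $u\perp\psi_1,\dots,\psi_{J-1}$. This is admissible for $0\le t\le 1-\lambda_J$, because $\gamma_2\le\lambda_J(1-\gamma_1)$. It then uses $\mu\le\lim_{t\to0^+}\big(E^{\rm M}(N+t)-E^{\rm M}(N)\big)/t$. The kinetic, external and Hartree terms are exact to first order, because the density is $\rho_{\gamma_*}+t|u|^2$ with no cross term. The exchange term is controlled by the concavity and degree-one homogeneity of $\gamma\mapsto X(\gamma^{1/2})$ (Wigner--Yanase--Lieb). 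These give the superadditivity $X(\widetilde\gamma(t)^{1/2})\ge X(\gamma_*^{1/2})+tX(|u\rangle\langle u|)$. Note that $\widetilde\gamma(t)^{1/2}-\gamma_*^{1/2}$ is not $\sqrt t\,|u\rangle\langle u|$ to leading order. This is exactly why the amplitude picture breaks down and concavity is needed.

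\textbf{Step 3 is essentially fine,} and choosing $h=k_{\gamma_*}\otimes1_{\C^q}$ with min-max works. A few remarks:
\begin{itemize}
\item $k_{\gamma_*}\le-\frac12\Delta-Q|x|^{-1}$ is not ``the wrong direction.'' It says exactly that $h=(-\frac12\Delta-Q|x|^{-1})\otimes1_{\C^q}$ is admissible, which is the paper's choice.
\item Your intermediate Newton expression is not a lower bound: the outer shell contributes $-\int_{|y|>|x|}\rho_{\gamma_*}(y)|y|^{-1}\,dy$. The final bound $\phi_{\gamma_*}\ge (Z-N)/|x|$ is nevertheless correct.
\item That final bound needs radial symmetry of $\rho_{\gamma_*}$, which you should cite; it follows from uniqueness of the density in \cite{FraLieSeiSie-07}.
\end{itemize}
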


The proof of Theorem \ref{th:mu} is based on the variational principle. More precisely, we use the upper bound 
\begin{equation*}
    \mu:=\mu(N) \le \lim_{t\to 0^+}\frac{E^{\rm M}(N+t)-E^{\rm M}(N)}{t}
\end{equation*}
from \cite{FraLieSeiSie-07}, together with a suitable trial state for $E^{\rm M}(N+t)$ to obtain \eqref{eq:mu-sigmaN}. In the atomic case, since $\rho_{\gamma_*}$ is radial, we can compare $k_{\gamma_*}$ with the Hamiltonian of the hydrogen atom, and then use the spectral properties of the latter to conclude the desired estimate $\mu<\sigma_N(h)\le -1/2$ when $N\le Z - C_0 Z^{1/3}$.

Finally, we will conclude the proof of Theorem \ref{th:asymptotic} in Section \ref{sec:main-thm}. This part of our argument is similar to Sobolev's analysis \cite{sobolev2022eigenvalue2} and is based on the spectral estimates and spectral asymptotics for certain integral operators.

\medskip

\noindent\textit{Acknowledgments.} This work was partially funded by the Deutsche Forschungsgemeinschaft (DFG, German Research Foundation) via the TRR 352 – Project-ID 470903074. 
Partial support through US National Science Foundation grant DMS-1954995 (R.L.F.), the German Research Foundation grant EXC-2111-390814868 (R.L.F., P.T.N.), and the European Research Council via the ERC Consolidator Grant RAMBAS Project-Nr. 10104424 (L.M., P.T.N.) is acknowledged.


\section{Proof of Theorem \ref{th:regularity1}: Regularity estimates} \label{sec:regularity}

In this section we prove Theorem \ref{th:regularity1}. The arguments are inspired, in part, by \cite{meng2023mixed}.

For notational convenience we focus on the atomic case, i.e., $V(x)=\frac{Z}{|x|}$ and the spinless case, i.e, $q=1$. But the proof works for the molecular case, i.e., $V= \sum_{j=1}^K \frac{Z_j}{|x-\bR_j|}$ and general spin $q\in \N^+$. For simplicity, we will also write $\gamma$ instead of $\gamma_*$.

We divide the analysis into four steps, occupying the corresponding subsections.

\subsection{Proof of $\Phi\in H^{2+\alpha}(\R^6)$}

We recall that we write $\Phi(x,y)=\gamma^{1/2}(x,y)$. We first show that $\Phi\in H^{2+\alpha}(\R^6)$ for any $0\leq\alpha<\frac{1}{2}$. As $\gamma$ is a minimizer of \eqref{eq:muellermin}, we have $\Tr \gamma =N$ and $\tr(\sqrt{-\Delta}\gamma \sqrt{-\Delta} ) <\infty$. Thus,
\begin{equation*}
\begin{split}
    \|\Phi\|^2_{H^1(\R^6)}&=\|\Phi\|_{L^2(\R^6)}^2+ \|\sqrt{-\Delta_x}\Phi\|_{L^2(\R^6)}^2+\|\sqrt{-\Delta_y}\Phi\|_{L^2(\R^6)}^2\\
    &=\tr(\gamma)+ 2\tr(\sqrt{-\Delta}\gamma \sqrt{-\Delta} )<\infty.
\end{split}
\end{equation*}
Then for any $j\geq 1$,
\begin{equation}\label{eq:psi_j}
    \lambda_j \|\psi_j\|_{H^1(\R^3)}=\|\gamma^{1/2}\psi_j\|_{H^1(\R^3)}\leq \|\Phi\|_{H^1(\R^6)}<+\infty.
\end{equation}

Consequently, for $j$ satisfying $\lambda_j=1$ we have
\begin{equation}
\begin{split}
     \|\psi_j(x)\psi_j^*(y)\|_{H^1(\R^6)}&\leq  \|\sqrt{1-\Delta_x}\psi_j(x)\psi_j^*(y)\|_{L^2(\R^6)}+\|\psi_j(x)\sqrt{1-\Delta_y}\psi_j^*(y)\|_{L^2(\R^6)}\\
     &\lesssim \|\psi_j\|_{H^1(\R^3)}.
\end{split}
\end{equation}
Moreover, using 
\begin{equation}
    |\{\lambda_j=1\}|= \sum_{\lambda_j=1} 1 \leq \sum_{j}\lambda_j= N \,,
\end{equation}
we deduce that $\lambda_j<1$ for $j\geq N+1$, and hence 
\begin{equation}
    \sup_{\lambda_j=1}|e_j|\leq \max_{j=1,\cdots,N} |e_j|<\infty.
\end{equation}
As a result,  we estimate the right-hand side of  \eqref{eq:two-body} as 
\begin{equation}\label{eq:right-finite}
    \left\|\sum_{\lambda_j=1}e_j \psi_j(x)\psi_j^*(y)\right\|_{H^1(\R^6)}\leq \sum_{\lambda_j=1}|e_j| \|\psi_j(x)\psi_j^*(y)\|_{H^1(\R^6)}<\infty .
\end{equation}

\medskip

To show $\Phi\in H^{2+\alpha}(\R^6)$ for any $\alpha<\frac{1}{2}$, we first show $\|\Phi\|_{H^2(\R^6)}<+\infty$. By \eqref{eq:two-body}, 
\begin{equation}
\begin{split}
     & \left\|(-\Delta_x-\Delta_y)\Phi\right\|_{L^2(\R^6)}\\
   \leq& 2\left\|\left(\phi_{\gamma}(x)+\phi_{\gamma}(y)+|x-y|^{-1}+ 2\mu\right)\Phi\right\|_{L^2(\R^6)}+\left\|\sum_{\lambda_j=1}e_j \psi_j(x)\psi_j^*(y)\right\|_{L^2(\R^6)}.
\end{split}
\end{equation}
By Hardy's inequality,
\begin{equation}
    \|\phi_{\gamma}(x)\Phi\|_{L^2(\R^6)}\leq 2(Z+N)\|\nabla_x\Phi\|_{L^2(\R^6)}\leq 2(Z+N)\|\Phi\|_{H^1(\R^6)},
\end{equation}
and
\begin{equation}
     \||x-y|^{-1}\Phi\|_{L^2(\R^6)}\leq 2\|\nabla_x\Phi\|_{L^2(\R^6)}\leq 2\|\Phi\|_{H^1(\R^6)}.
\end{equation}
These two inequalities and \eqref{eq:right-finite} yield
\begin{equation}
     \|(-\Delta_x-\Delta_y)\Phi\|_{L^2(\R^6)}<\infty.
\end{equation}
Thus,
\begin{equation}
    \|\Phi\|_{H^2(\R^6)}\leq \|\Phi\|_{L^2(\R^6)}+\|(-\Delta_x-\Delta_y)\Phi\|_{L^2(\R^6)}<\infty.
\end{equation}

\medskip

Now we are going to show that $\Phi\in H^{2+\alpha}$ for any $0\leq \alpha<\frac{1}{2}$. It suffices to show that
\begin{equation}
    \|(-\Delta_x-\Delta_y)^{1+\alpha/2} \Phi\|_{L^2(\R^6)}<\infty.
\end{equation}
By \eqref{eq:two-body} again,
\begin{equation}\label{eq:1.8}
    \begin{split}
       & \|(-\Delta_x-\Delta_y)^{1+\alpha/2} \Phi\|_{L^2(\R^6)}\\
       \leq& 2 \left\|(-\Delta_x-\Delta_y)^{\alpha/2} \left(\left(\phi_{\gamma}(x)+\phi_{\gamma}(y)+|x-y|^{-1}+ 2\mu\right)\Phi\right)\right\|_{L^2(\R^6)}+  \left\|\sum_{\lambda_j=1}e_j \psi_j(x)\psi_j^*(y)\right\|_{H^1(\R^6)}\\
       \leq& 2 \left\|(1-\Delta_x)^{\alpha/2}\left(\left(\phi_{\gamma}(x)+\phi_{\gamma}(y)+|x-y|^{-1}+ 2\mu\right)\Phi\right)\right\|_{L^2(\R^6)}\\
       &+ 2 \left\|(1-\Delta_y)^{\alpha/2}\left(\left(\phi_{\gamma}(x) +\phi_{\gamma}(y)+|x-y|^{-1}+2\mu\right)\Phi\right)\right\|+ \left\|\sum_{\lambda_j=1}e_j \psi_j(x)\psi_j^*(y)\right\|_{H^1(\R^6)}.
    \end{split}
\end{equation}

We begin by considering the first term on the right-hand side of \eqref{eq:1.8}. 
First, by Hardy's inequality,
\begin{equation}\label{eq:2.12a1}
     \left\|(1-\Delta_x)^{\alpha/2}\phi_{\gamma}(y)\Phi\right\|_{L^2(\R^6)}=    \left\|\phi_{\gamma}(y)(1-\Delta_x)^{\alpha/2}\Phi\right\|_{L^2(\R^6)}\lesssim \|\Phi\|_{H^2(\R^6)}.
\end{equation}
Next,
\begin{equation}\label{eq:2.12a2}
\begin{split}
     \left\|(1-\Delta_x)^{\alpha/2}\phi_{\gamma}(x)\Phi\right\|_{L^2(\R^6)}
       \leq &Z\left\|(1-\Delta_x)^{\alpha/2}(|x|^{-1}\Phi)\right\|_{L^2(\R^6)}\\
       &+\left\|(1-\Delta_x)^{\alpha/2}(\rho_{\gamma}*|\cdot|^{-1}(x)\Phi)\right\|_{L^2(\R^6)}.
\end{split}
\end{equation} 
In the molecular case, we only need to split $V$ in $\phi_\gamma$ and treat each nucleus separately.

Concerning the term $\rho_{\gamma}*|\cdot|^{-1}(x)$, by Hardy's inequality again, 
\begin{equation}\label{eq:hartree}
    \begin{split}
       & \|\rho_{\gamma}*|\cdot|^{-1}\|_{W^{1,\infty}(\R^3)}\leq \|\rho_{\gamma}*|\cdot|^{-1}\|_{L^\infty(\R^3)}+\|\nabla (\rho_{\gamma}*|\cdot|^{-1})\|_{L^\infty(\R^3)}\\
        \leq & \sup_{x\in \R^3}\int_{\R^3}\int_{\R^3}\frac{|\Phi(y,z)|^2}{|x-y|}dydz+\sup_{x\in \R^3}\int_{\R^3}\int_{\R^3}\frac{|\Phi(y,z)|^2}{|x-y|^2}dydz \lesssim \|\Phi\|_{H^1(\R^6)}^2.
    \end{split}
\end{equation}
Then
\begin{equation}\label{eq:1.15}
    \begin{split}
       &\left\|(1-\Delta_x)^{\alpha/2}(\rho_{\gamma}*|\cdot|^{-1}(x)\Phi)\right\|_{L^2(\R^6)}\leq \|\rho_{\gamma}*|\cdot|^{-1}(x)\Phi\|_{H^1(\R^6)}\\
       \leq & \|\rho_{\gamma}*|\cdot|^{-1}\|_{W^{1,\infty}(\R^3)}\|\Phi\|_{H^1(\R^6)}\lesssim\|\Phi\|_{H^1(\R^6)}^2.
    \end{split}
\end{equation}

Next, we study the term $Z_j|x|^{-1}\Phi$. Using $(1-\Delta_x)^{\alpha/2}=(1-\Delta_x)^{(\alpha-1)/2}(1-\Delta_x)^{1/2}$, we have for $\alpha<\frac{1}{2}$,
\begin{equation}\label{eq:1.13}
    \begin{split}
        &   \left\|(1-\Delta_x)^{\alpha/2}(|x|^{-1}\Phi)\right\|_{L^2(\R^6)}\\
 \leq&  \left\|(1-\Delta_x)^{(\alpha-1)/2}(|x|^{-1}\Phi)\right\|_{L^2(\R^6)}+\left\|\nabla_x (1-\Delta_x)^{(\alpha-1)/2}(|x|^{-1}\Phi)\right\|_{L^2(\R^6)}\\
 \leq&  \left\||x|^{-1}\Phi\right\|_{L^2(\R^6)}+\left\|\ (1-\Delta_x)^{(\alpha-1)/2}\left( \frac{x}{|x|^3}\Phi\right)\right\|_{L^2(\R^6)}\\
 &+\left\|\ (1-\Delta_x)^{(\alpha-1)/2}\left( |x|^{-1}\nabla_x\Phi\right)\right\|_{L^2(\R^6)}\\
 \lesssim& \|\Phi\|_{H^1(\R^6)}+ \| |x|^{-1-\alpha}\Phi\|_{L^2(\R^6)}+\||x|^{-\alpha}\nabla_x \Phi\|_{L^2(\R^6)}\\
 \lesssim& \|\Phi\|_{H^1(\R^6)}+\|(-\Delta)^{(1+\alpha)/2}\Phi\|_{L^2(\R^6)} \lesssim \|\Phi\|_{H^2(\R^6)}<\infty.
    \end{split}
\end{equation}
In the last two inequalities, we used Herbst's inequality \cite{Herbst77} twice: for any $0\leq s<\frac{3}{2}$
\begin{equation}
    \|(-\Delta)^{-s/2}|\cdot|^{-s}\|_{\cB(L^2(\R^3))}<+\infty.
\end{equation}
Note that the assumption $\alpha<1/2$ is dictated by the assumption $s<3/2$ in the last inequality with $s=1+\alpha$. Thus the term in \eqref{eq:2.12a2} can be estimated using the above bounds. 

Concerning the term $(1-\Delta_x)^{\alpha/2}\left(|x-y|^{-1}\Phi\right)$, we need to apply Herbst's inequality in a slightly different way: for any fixed $y\in \R^3$, any $f\in H^1(\R^6)$ and $0\leq s< \frac{3}{2}$, we have
\begin{equation}
    \int_{\R^3}\frac{|f(x,y)|^2}{|x-y|^{2s}}dx\lesssim  \int_{\R^3}|(-\Delta_x)^{s/2}f|^2(x,y)dx.
\end{equation}
Thus,
\begin{equation}
    \int_{\R^3} \int_{\R^3}\frac{|f(x,y)|^2}{|x-y|^{2s}}dxdy\lesssim   \int_{\R^3}\int_{\R^3}|(-\Delta_x)^{s/2}f|^2(x,y)dxdy.
\end{equation}
By duality, we also have
\begin{equation}
     \int_{\R^3} \int_{\R^3}|(-\Delta_x)^{-s/2} f|^2(x,y)dxdy\lesssim   \int_{\R^3}\int_{\R^3}||x-y|^{s}f|^2(x,y)dxdy.
\end{equation}
Then proceeding as for \eqref{eq:1.13}
\begin{equation}\label{eq:2.12a3}
  \begin{split}
        &   \left\|(1-\Delta_x)^{\alpha/2}(|x-y|^{-1}\Phi)\right\|_{L^2(\R^6)}\\
\leq&  \left\||x-y|^{-1}\Phi\right\|_{L^2(\R^6)}+\left\|\ (1-\Delta_x)^{(\alpha-1)/2}\left( \frac{x-y}{|x-y|^3}\Phi\right)\right\|_{L^2(\R^6)}\\
&+\left\|(1-\Delta_x)^{(\alpha-1)/2}\left( |x-y|^{-1}\nabla_x\Phi\right)\right\|_{L^2(\R^6)}\\
 \lesssim& \|\Phi\|_{H^1(\R^6)}+ \| |x-y|^{-1-\alpha}\Phi\|_{L^2(\R^6)}+\||x-y|^{-\alpha}\nabla_x \Phi\|_{L^2(\R^6)}\\
 \lesssim& \|\Phi\|_{H^1(\R^6)}+\|(-\Delta_x)^{(1+\alpha)/2}\Phi\|_{L^2(\R^6)} \leq \|\Phi\|_{H^2(\R^6)}
    \end{split}
\end{equation}

To summarize, the first term on right-hand side in \eqref{eq:1.8} can be bounded using \eqref{eq:2.12a1}, \eqref{eq:2.12a2} and \eqref{eq:2.12a3}. The study of the second term on the right-hand side in \eqref{eq:1.8} is similar and the third term is finite by \eqref{eq:right-finite}. Thus $\Phi\in H^{2+\alpha}(\R^6)$ for any $0\leq \alpha<\frac{1}{2}$.


\subsection{Proof of $\Phi\in B^{5/2}_{2,\infty}(\R^6)$}\label{sec:2.2}

Before proving $\Phi\in B^{5/2}_{2,\infty}(\R^6)$, we derive a useful inequality concerning difference quotients of the function $Q(x)=|x|^{-1}$, $x\in\R^3$. Writing $h_0= h/|h|\in\mathbb S^2$ and changing variables $x=|h|x'$, we find that
\begin{equation}\label{eq:h1/2}
\begin{split}
\left\|\Delta_{h} Q\right\|_{L^2(\R^3)}^2&=\int_{\R^3} \left||x+h_0 |h||^{-1}-|x|^{-1}\right|^2 dx\\
&=|h|\int_{\R^3} \left||x'+h_0|^{-1}-|x'|^{-1}\right|^2 dx'\lesssim |h|.
\end{split}
\end{equation}

Next, we observe the following useful fact.

\begin{lemma}
    Let $d\in\N^*$ and $s\geq 2$. If $u\in L^2(\R^d)$ and $\Delta u\in B^{s-2}_{2,\infty}(\R^d)$, then $u\in B^{s}_{2,\infty}(\R^d)$.
\end{lemma}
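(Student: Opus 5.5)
We will work on the Fourier side, or equivalently through the commutation of finite differences with the Laplacian. Take an integer $l>s$; then $l>s-2$ as well, so the $B^{s-2}_{2,\infty}$ norm of $\Delta u$ may be computed with the same $l$. The plan is to bound $|h|^{-s}\|\Delta_h^{(l)}u\|_{L^2}$ uniformly in $h\neq 0$. The operators $\Delta_h^{(l)}$ and $\Delta$ are Fourier multipliers, so they commute. Since $\|u\|_{L^2}$ is finite by assumption, only the seminorm needs to be controlled.

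Write $\widehat{\Delta_h^{(l)}u}(\xi)=\pm(e^{ih\cdot\xi}-1)^l\hat u(\xi)$ and split frequency space according to the size of $|\xi|$ relative to $|h|^{-1}$.

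For high frequencies, $|\xi|\ge |h|^{-1}$, we use $|\xi|^2\ge |h|^{-2}$. This gives
\[
\int_{|\xi|\ge|h|^{-1}}|e^{ih\cdot\xi}-1|^{2l}|\hat u|^2\,d\xi\le |h|^{4}\int|e^{ih\cdot\xi}-1|^{2l}|\xi|^4|\hat u|^2\,d\xi=|h|^4\|\Delta_h^{(l)}\Delta u\|_{L^2}^2,
\]
and the right-hand side is at most $|h|^{4+2(s-2)}\|\Delta u\|^2_{B^{s-2}_{2,\infty}}=|h|^{2s}\|\Delta u\|^2_{B^{s-2}_{2,\infty}}$. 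This is exactly the scaling we want.

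The low frequencies, $|\xi|<|h|^{-1}$, are the main obstacle. Here dividing by $|\xi|^4$ is harmful, and the crude bound $|e^{ih\cdot\xi}-1|^{2l}\le |h|^{2l}|\xi|^{2l}$ requires control of $\hat u$ at intermediate scales. I will handle this by a dyadic decomposition. Taking directions $h'=2^{-k}h$ and using the Besov bound at all scales controls the dyadic annuli $A_k=\{|\xi|\sim 2^k|h|^{-1}\}$ for $k\le 0$. More precisely, we use that the $B^{s-2}_{2,\infty}$ seminorm bounds $\sup_j 2^{j(s-2)}\|P_j\Delta u\|_{L^2}$, where $P_j$ are Littlewood--Paley projections; this equivalence holds by the standard characterization. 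It yields $\|P_ju\|_{L^2}\lesssim 2^{-js}\|\Delta u\|_{B^{s-2}_{2,\infty}}$ for $j\ge 1$, say, and $\|P_{\le 0}u\|\le\|u\|_{L^2}$. Then
\[
\|\Delta_h^{(l)}u\|_{L^2}\lesssim \min(1,|h|^l)\,\|u\|_{L^2}+\sum_{j\ge1}\min(1,(2^j|h|)^l)\,2^{-js}\,\|\Delta u\|_{B^{s-2}_{2,\infty}}.
\]
Because $l>s$, the sum is $\lesssim |h|^s$: the terms with $2^j\le|h|^{-1}$ form a geometric sum dominated by the top term, and the terms with $2^j>|h|^{-1}$ give $\sum 2^{-js}\sim |h|^s$. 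The contribution of $\|u\|_{L^2}$ is $\lesssim|h|^s$ for $|h|\le1$, since $l>s$. For $|h|\ge1$ everything is trivially bounded by $2^l\|u\|_{L^2}\le 2^l|h|^s\|u\|_{L^2}$.

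If one prefers to avoid the Littlewood--Paley characterization, the estimate on a single annulus can be obtained directly. The multiplier $\int_{\mathbb S^{d-1}}|e^{i t\omega\cdot\xi}-1|^{2l}d\omega$ is bounded below on $|\xi|\sim t^{-1}$, so averaging the defining bound over directions of $h$ at a fixed length $t$ controls $\int_{|\xi|\sim t^{-1}}|\xi|^4|\hat u|^2$ by $t^{2(s-2)}$. This gives the same dyadic bounds and closes the argument.
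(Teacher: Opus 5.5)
Your argument is correct, but it takes a longer route than the paper. The paper does not split frequencies. It factors $\Delta_h^{(l)}=\Delta_h^{(2)}\Delta_h^{(l-2)}$ and bounds $\|\Delta_h^{(2)}w\|_{L^2}\le |h|^2\|\Delta w\|_{L^2}$; on the Fourier side this is just $|e^{ih\cdot\xi}-1|^4\le |h|^4|\xi|^4$, valid for \emph{every} $\xi$. It then commutes $\Delta$ with $\Delta_h^{(l-2)}$ and applies the hypothesis on $\Delta u$ with $l-2$ differences. This is admissible because $l>s\ge 2$ gives $l-2\ge 1$ and $l-2>s-2$. The result is $\|\Delta_h^{(l)}u\|_{L^2}\lesssim |h|^{s}\|\Delta u\|_{B^{s-2}_{2,\infty}}$ in one line.

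The ``low-frequency obstacle'' you identify comes from your high-frequency step. There you let all $l$ difference factors act on $\Delta u$, so $|\xi|^4$ must be absorbed by $|h|^{-4}$. If you spend only two of the $l$ factors on $|\xi|^4$, the split, the dyadic decomposition and the Littlewood--Paley characterization all become unnecessary. Also, once you invoke the Littlewood--Paley bounds, the dyadic sum already covers all frequencies, so your separate high-frequency estimate is redundant.

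What your route buys is the familiar general picture: dyadic bounds $\|P_ju\|_{L^2}\lesssim 2^{-js}$, i.e.\ $\Delta$ lifts $B^{s-2}_{2,\infty}$ to $B^s_{2,\infty}$ modulo low frequencies. The price is importing the Littlewood--Paley characterization of the difference-defined Besov norm, or proving your directional-averaging lower bound. The paper's factorization needs nothing beyond the elementary inequality above.
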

\begin{proof}
 By the second equation in \eqref{eq:Delta-h-l}, we have for any $d\in \N^*$,
\begin{equation}\label{eq:Delta-h-Delta}
       \|\Delta^{(l)}_{h} u\|_{L^2(\R^d)}= \|\Delta^{(2)}_{h} \Delta^{(l-2)}_{h} u\|_{L^2(\R^d)}\lesssim|h|^2\|\Delta^{(l-2)}_{h}\Delta u\|_{L^2(\R^d)}.
\end{equation}
This proves the lemma.
\end{proof}

Thus,  to show $\Phi\in B^{5/2}_{2,\infty}(\R^6)$, it suffices to show that $(-\Delta_x-\Delta_y) \Phi\in B^{1/2}_{2,\infty}(\R^6)$. By the equation \eqref{eq:two-body}, it suffices to show
\begin{equation}
    \frac{1}{|x|}\Phi,\frac{1}{|y|}\Phi,\frac{1}{|x-y|}\Phi \in B^{1/2}_{2,\infty}(\R^6)\label{eq:1.55}.
\end{equation}
The fact that other terms of \eqref{eq:two-body} are in $H^{1}(\R^6)\subset B_{2,\infty}^{1/2}(\R^6)$ follows immediately from \eqref{eq:right-finite} and \eqref{eq:1.15}.

It remains to prove \eqref{eq:1.55}. We will use the inequality
$$
\| \Delta_h (fg) \|_{L^2(\R^d)} \leq \| (\Delta_h f) g \|_{L^2(\R^d)} + \| f \Delta_{-h} g \|_{L^2(\R^d)} \,,
$$
which follows from the triangle inequality and a change of variables.

Let $z=(x,y)\in\R^3\times\R^3$. As $|x|^{-1}\Phi\in L^2(\R^6)$, we only need to show $\|\Delta_{h,z} (|x|^{-1}\Phi)\|_{L^2(\R^6)}\lesssim |h|^{1/2}$. By Sobolev's and Hardy's inequalities, we have
\begin{equation}\label{eq:2.23}
    \begin{split}
       &  \|\Delta_{h} (|x|^{-1}\Phi)\|_{L^2(\R^6)}\\
       \leq&  \|(\Delta_{h} |x|^{-1})\Phi\|_{L^2(\R^6)}+ \||x|^{-1}(\Delta_{-h} \Phi)\|_{L^2(\R^6)}\\
    \lesssim& \|\Delta_{h} |\cdot|^{-1}\|_{L^2(\R^3)}\|\Phi\|_{L^\infty(\R^3_x;L^2(\R^3_y))} +\|\Delta_{-h} \nabla_x \Phi\|_{L^2(\R^6)}\\
    \lesssim& |h|^{1/2}\|\Phi\|_{H^{2}(\R^6)}+|h|^{1/2}\|\nabla_x \Phi\|_{B^{1/2}_{2,\infty}(\R^6)}\lesssim |h|^{1/2}\|\Phi\|_{H^2(\R^6)}.
    \end{split}
\end{equation}

This shows that $|x|^{-1}\Phi\in B^{5/2}_{2,\infty}(\R^6)$. Analogously, $|y|^{-1}\Phi\in B^{5/2}_{2,\infty}(\R^6)$ and $|x-y|^{-1}\Phi\in B^{5/2}_{2,\infty}(\R^6)$ with the change of variable $r=\frac{1}{2}(x-y)$ and $R:=\frac{1}{2}(x+y)$. This proves \eqref{eq:1.55} and therefore $\Phi\in B^{5/2}_{2,\infty}(\R^6)$.


\subsection{Proof of $\Psi\in H^{3+\alpha}(\R^6)$}

Now we consider  $\Psi(x,y)=e^{-F(x,y)}\Phi(x,y)$ with $F$ the Jastrow factor in \eqref{eq:Jastrow}. We have
\begin{equation}\label{eq:inversexbesov}
\begin{split}
    \Delta_x\Phi=&e^{F}\Delta_x\Psi+\nabla_x e^{F} \cdot \nabla_x \Psi +\Psi\Delta_x e^F\\
    =& e^{F} \left( \Delta_x \Psi + \nabla_x F \cdot\nabla_x\Phi  + (|\nabla_x F|^2 + \Delta_x F) \Psi \right) \\
    =&e^{F}\Delta_x\Psi-e^F\left(  \frac{Zx}{|x|}\theta'(|x|)+\frac{x-y}{2|x-y|}\theta'(|x-y|)\right)\cdot \nabla_x\Psi\\
    &+e^F\left| \frac{Zx}{|x|}\theta'(|x|)+\frac{x-y}{2|x-y|}\theta'(|x-y|) \right|^2\Psi\\
    &-e^F\left(Z\theta''(|x|)+\frac{1}{2}\theta''(|x-y|) +\frac{2Z}{|x|}\theta'(|x|)+\frac{1}{|x-y|}\theta'(|x-y|)\right)\Psi.
\end{split}
\end{equation}

Inserting $\Phi(x,y)=e^{F(x,y)}\Psi(x,y)$ into \eqref{eq:two-body} yields
\begin{equation}\label{eq:euler-jastrow}
    -\frac{1}{2}\left(\Delta_x +\Delta_y\right)\Psi =-a_x\cdot\nabla_x \Psi-a_y\cdot\nabla_y \Psi-(b+c+d+2\mu)\Psi + \sum_{\lambda_j=1}e^{-F} e_j\psi_j(x)\psi_j^*(y)
\end{equation}
with
\begin{align}
    a_x(x,y)&:=  \frac{Zx}{2|x|}\theta'(|x|)+\frac{x-y}{4|x-y|}\theta'(|x-y|), \label{eq:ax}\\
    a_y(x,y)&:=  \frac{Zy}{2|y|}\theta'(|x|)-\frac{x-y}{4|x-y|}\theta'(|x-y|),\label{eq:ay}\\
    b(x,y)&:=-\frac{Z}{2}\left(\frac{x}{|x|}\theta'(|x|)-\frac{y}{|y|}\theta'(|y|)\right)\cdot\frac{x-y}{|x-y|}\theta'(|x-y|) \label{eq:b}\\
    c(x,y)&:=\rho_{\gamma}*|\cdot|^{-1}(x)+\rho_{\gamma}*|\cdot|^{-1}(y) \label{eq:c}
\end{align}
and
\begin{equation}\label{eq:d}
    \begin{split}
d(x,y)&:=-\frac{Z^2}{2}\big(\theta'(|x|)^2+\theta'(|y|)^2\big)-\frac{1}{4}\theta'(|x-y|)^2+\frac{Z}{2}\big(\theta''(|x|)+\theta''(|y|)\big)+\frac{1}{2}\theta''(|x-y|)\\
        &- \frac{Z}{|x|}(1-\theta'(|x|)) - \frac{Z}{|y|}(1-\theta'(|y|))- \frac{1}{|x-y|}(1-\theta'(|x-y|)).
    \end{split}
\end{equation}
We clearly have
\begin{align}
    a_x, a_y,b \in L^\infty(\R^6) \,.\label{eq:coeffbdd0}
\end{align}

Next, let us show that
\begin{align}
    c \in W^{1,\infty}(\R^6) \,.\label{eq:coeffbdd}
\end{align}
Indeed, the fact that $\rho_\gamma*|\cdot|^{-1}$ is bounded is an easy consequence of $\rho_\gamma\in L^1(\R^3)\cap L^3(\R^3)$, where the $L^3$-bound follows from the Sobolev and the Hoffmann--Ostenhof inequality. For the boundedness of the gradient of $\rho_\gamma*|\cdot|^{-1}$, we write
$$
\rho_\gamma(x)= \int_{\R^3} \Phi(x,y)\Phi(y,x)\,dy
$$
and use the embedding $H^1(\R^3)\subset L^\infty(\R^3)$ to deduce
$$
\| \rho_\gamma \|_{L^\infty(\R^3)}^2 \lesssim \|\Phi\|_{H^2(\R^3)}^2 \,.
$$
Thus,
\begin{equation}
    \label{eq:rholinfty}
    \rho_\gamma\in L^1(\R^3)\cap L^\infty(\R^3) \,.
\end{equation}
This easily implies that $\rho_\gamma * |\cdot|^{-2}$ is bounded and, consequently, that $\nabla(\rho_\gamma*|\cdot|^{-1})$ is bounded, as asserted in \eqref{eq:coeffbdd}.

Moreover, we have
\begin{equation}
    \label{eq:dsmoothbdd}
    d\in C^\infty_{\rm bdd}(\R^6) \,,
\end{equation}
where the subscript means that the function and all its derivatives are bounded. Indeed, the smoothness follows from $\theta\in C^\infty(\R)$, $\theta'(t)=1$ for $t\in [-1/2,1/2]$, and $\Supp(\theta')\subset [-1,1]$.

Using $\Phi\in H^2(\R^6)$, the definition of $F$ and Hardy's inequality, we have $\Psi\in H^2(\R^6)$.

\medskip

Next, we show that $\Psi\in H^3(\R^6)$. It suffices to show that each term on the right-hand side of \eqref{eq:euler-jastrow} is in $H^1(\R^6)$.

As $|\nabla_x a(x)|\lesssim |x|^{-1}$, by Hardy's inequality,
\begin{equation}
    \begin{split}
         \|a_x\cdot \nabla_x\Psi\|_{H^1(\R^6)}&\lesssim \||x|^{-1} \nabla\Psi\|_{H^1(\R^6)}+\|\Psi\|_{H^2(\R^6)}\lesssim \|\Psi\|_{H^2(\R^6)}.
    \end{split}
\end{equation}
Analogously,
\begin{equation}
      \|a_y\cdot \nabla_y\Psi\|_{H^1(\R^6)}\lesssim \|\Psi\|_{H^2(\R^6)}.
\end{equation}
As $$|\nabla_x b|\lesssim \frac{1}{|x|}+\frac{1}{|y|}+\frac{1}{|x-y|},$$by Hardy's inequality and the same argument,
\begin{equation}
      \|b \Psi\|_{H^1(\R^6)}\lesssim \|\Psi\|_{H^1(\R^6)} \leq \|\Psi\|_{H^2(\R^6)}.
\end{equation}
By \eqref{eq:hartree},
\begin{equation}
    \|c\Psi\|_{H^{1}(\R^6)}\leq \|c\|_{W^{1,\infty}(\R^6)}\|\Psi\|_{H^1(\R^6)}\lesssim \|\Phi\|_{H^1(\R^6)}\|\Psi\|_{H^2(\R^6)}.
\end{equation}
As $d\in C_{\rm bdd}^\infty(\R^6)$, we can immediately get
\begin{equation}
    \|d\Psi\|_{H^1(\R^6)}\lesssim \|\Psi\|_{H^2(\R^6)}.
\end{equation}
Finally, using $e^{-F}\lesssim 1$ and $|\nabla_{x} e^{-F}|+|\nabla_{y} e^{-F}|\lesssim 1$ and arguing similarly as in \eqref{eq:right-finite}, we find
\begin{equation}
    \left\|\sum_{\lambda_j=1}e^{-F} e_j\psi_j(x)\psi_j^*(y)\right\|_{H^1(\R^6)}\lesssim \sum_{\lambda_j=1} |e_j| \|\psi_j\|_{H^1(\R^3)} \lesssim \|\Phi\|_{H^1(\R^6)}.
\end{equation}
This shows that $\Psi\in H^3(\R^6)$.

\medskip

Now we show $\Psi\in H^{3+\alpha}(\R^6)$ for any $\alpha<\frac{1}{2}$. To do so, we are going to show that each terms on the right-hand side of \eqref{eq:euler-jastrow} is in $H^{1+\alpha}(\R^6)$.

We first study the term $a_x\cdot\nabla_x \Psi$. Indeed, as 
$$\left|\Delta \frac{x}{|x|}\right|\lesssim |x|^{-2}, \quad \left|\Delta \frac{x-y}{|x-y|}\right|\lesssim |x-y|^{-2},$$
analogous to \eqref{eq:1.13}, by Herbst's inequality for $\alpha<\frac{1}{2}$, we have
\begin{equation}
\begin{split}
  & \| a_x\cdot\nabla_x \Psi\|_{H^{1+\alpha}(\R^6)}\\
     =&\|(1-\Delta_x)^{(\alpha-1)/2} (1-\Delta_x) (a_x\cdot \nabla_x\Psi) \|_{L^2(\R^6)}\\
      \lesssim &\left\| |x|^{-1-\alpha}\nabla_x\Psi\right\|_{L^2(\R^6)}+\left\||x|^{-\alpha}\nabla_x\otimes \nabla_x\Psi\right\|_{L^2(\R^6)}+\left\|\nabla_x\Psi\right\|_{L^2(\R^6)}\\
     &+\left\| |x-y|^{-1-\alpha}\nabla_x\Psi\right\|_{L^2(\R^6)}+\left\||x-y|^{-\alpha}\nabla_x\otimes \nabla_x\Psi\right\|_{L^2(\R^6)}+\left\|\nabla_x\Psi\right\|_{L^2(\R^6)} \\
     \lesssim&\|\Psi\|_{H^3(\R^6)} .
\end{split}
\end{equation}
Analogously,
\begin{equation}
      \| a_y\cdot\nabla_y \Psi\|_{H^{1+\alpha}(\R^6)}\lesssim \|\Psi\|_{H^3(\R^6)}.
\end{equation}
As 
$$|\Delta_x b|\lesssim \frac{1}{|x|^2}+\frac{1}{|y|^2}+\frac{1}{|x-y|^2},$$
proceeding as above for $a_x\cdot\Psi$, we also have
\begin{equation}
     \| b \Psi\|_{H^{1+\alpha}(\R^6)}\lesssim \|\Psi\|_{H^3(\R^6)}.
\end{equation}
Next, using \eqref{eq:coeffbdd} and \eqref{eq:rholinfty} together with $-\Delta (\rho_\gamma*|\cdot|^{-1}) = 4\pi \rho_\gamma$, we find
\begin{equation}
    \begin{split}
         & \|c\Psi\|_{H^{1+\alpha}(\R^6)}\leq   \|c\Psi\|_{H^{2}(\R^6)}\\
         \lesssim& \|c\Psi\|_{L^2(\R^6)}+\|(\Delta_x c)\Psi\|_{L^2(\R^6)}+\|\nabla_x c\cdot\nabla_x\Psi \|_{L^2(\R^6)}+\|c (\Delta_x\Psi)\|_{L^2(\R^6)}\\
    \lesssim& (\|c\|_{W^{1,\infty}(\R^6)} + \|\rho_\gamma\|_{L^\infty(\R^3)}) \|\Psi\|_{H^2(\R^6)}\\
    & +\|(\Delta_y c)\Psi\|_{L^2(\R^6)}+\|\nabla_y c\cdot\nabla_x\Psi \|_{L^2(\R^6)}+\|c (\Delta_y\Psi)\|_{L^2(\R^6)}\\
    \lesssim & \|\psi\|_{H^2(\R^6)} \,.
    \end{split}
\end{equation}
In addition, as $d\in C_{\rm bdd}^\infty(\R^6)$,
\begin{equation}
    \|d\Psi\|_{H^{1+\alpha}(\R^6)} \leq \|d\Psi\|_{H^2(\R^6)} \lesssim \|\Psi\|_{H^2(\R^6)}.
\end{equation}

Finally, using 
$$|\Delta_x e^{-F}|+|\Delta_y e^{-F}|\lesssim \frac{1}{|x|}+\frac{1}{|y|}+\frac{1}{|x-y|}$$
and the $H^2$-analogue of Eq.~\eqref{eq:psi_j} we get 
\begin{equation}
    \begin{split}
         &\left\|\sum_{\lambda_j=1}e^{-F} e_j\psi_j(x)\psi_j^*(y)\right\|_{H^{1+\alpha}(\R^6)}\leq  \left\|\sum_{\lambda_j=1}e^{-F} e_j\psi_j(x)\psi_j^*(y)\right\|_{H^{2}(\R^6)}\\
         \lesssim &\sum_{\lambda_j=1}\|\psi_j\|_{H^2(\R^3)}^2 +\sum_{\lambda_j=1}\left\|\left(\frac{1}{|x|}+\frac{1}{|y|}+\frac{1}{|x-y|}\right)\psi_j(x)\psi_j^*(y)\right\|_{L^2(\R^6)}   \lesssim \|\Phi\|_{H^2(\R^6)}^2.
    \end{split}
\end{equation}

Gathering these estimates, we conclude that $\Psi\in H^{3+\alpha}(\R^6)$.


\subsection{Proof of $\Psi\in B^{7/2}_{2,\infty}(\R^6)$}\label{sec:2.4}
The proof is similar to the one of $\Phi\in B^{5/2}_{2,\infty}(\R^6)$. Similarly as in \eqref{eq:Delta-h-Delta}, we have
\begin{equation}
    \|\Phi\|_{B^{7/2}_{2,\infty}(\R^6)}\lesssim \|\Phi\|_{L^2(\R^6)}+ \sup_{h\neq 0} |h|^{-1/2}\|\Delta^{(1)}_{h}\Phi\|_{H^3(\R^6)} \,.
\end{equation}
Thus, it suffices to study $(-\Delta_x-\Delta_y)\nabla_x \Psi$ and $(-\Delta_x-\Delta_y)\nabla_y \Psi$. From \eqref{eq:euler-jastrow}, it suffices to show 
\begin{equation*}
   \nabla_x (a_x\cdot\nabla_x \Psi), \nabla_x(b\Psi) \in B^{1/2}_{2,\infty}(\R^6),
\end{equation*}
which follows by repeating the proof of \eqref{eq:2.23}. Other terms on the right-hand side \eqref{eq:euler-jastrow} are more regular, thus $(-\Delta_x-\Delta_y)\nabla_x \Psi$ and $(-\Delta_x-\Delta_y)\nabla_y \Psi$ are in $B^{1/2}_{2,\infty}(\R^6)$. As a result, $\Psi\in B^{7/2}_{2,\infty}(\R^6)$. The proof of Theorem \ref{th:regularity1} is complete. 


\section{Further regularity estimates}\label{sec:regularity2}

\subsection{Regularity estimates with explonential weights}

In this subsection we refine the regularity estimates from the previous section under the decay assumption that
\begin{equation}
    \label{eq:decayass}
    e^{\kappa|x|} \rho_{\gamma_*} \in L^1(\R^3)
\end{equation}
for some $\kappa>0$. We set
\begin{equation}
    \label{eq:defnu}
    \nu = \frac{\kappa}{4}
\end{equation}
and consider the functions
\begin{equation}
    \label{eq:phipsitilde}
    \widetilde\Phi(x,y) := e^{\nu(\langle x \rangle + \langle y \rangle)} \Phi(x,y) \,,
    \qquad
    \widetilde\Psi(x,y) := e^{\nu(\langle x \rangle + \langle y \rangle)} \Psi(x,y) \,.
\end{equation}

The following theorem is the analogue of Theorem \ref{th:regularity1}.

\begin{lemma}\label{lem:5.3}
    Let $\gamma_*$ be a minimizer for \eqref{eq:muellermin} in the molecular case and assume \eqref{eq:decayass} for some $\kappa>0$. Then we have
    \begin{equation*}
        \widetilde{\Phi}\in H^{2+\alpha}(\R^6:\C^{q\times q})\cap B^{5/2}_{2,\infty}(\R^6:\C^{q\times q}).
    \end{equation*}
    and
    \begin{equation*}
        \widetilde{\Psi}\in H^{3+\alpha}(\R^6:\C^{q\times q})\cap B^{7/2}_{2,\infty}(\R^6:\C^{q\times q}).
    \end{equation*}
    for any $0\leq \alpha<\frac{1}{2}$.
    As a consequence, $ \widetilde{\Phi}\in L^2(\R^3_x;B^{5/2}_{2,\infty}(\R^3_y:\C^{q\times q}))$ and $ \widetilde{\Psi}\in L^2(\R^3_x;B^{7/2}_{2,\infty}(\R^3_y:\C^{q\times q}))$.
\end{lemma}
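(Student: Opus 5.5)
The plan is to rerun the proof of Theorem \ref{th:regularity1}, now for the weighted functions. For this I need a weighted version of the Euler--Lagrange equation \eqref{eq:two-body} and weighted versions of the a priori bounds that were used there.

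\textbf{Weighted a priori bounds.} Write $w(x,y)=e^{\nu(\langle x\rangle+\langle y\rangle)}$. The first task is to show $\widetilde\Phi\in H^1(\R^6)$, which is equivalent to the weighted statements $\int w^2|\Phi|^2<\infty$ and $\int w^2|\nabla\Phi|^2<\infty$. For the $L^2$ part, write $\int |\Phi(x,y)|^2 w^2\,dxdy$ as the trace $\tr(\gamma^{1/2}e^{2\nu\langle x\rangle}\gamma^{1/2}e^{2\nu\langle y\rangle})$. By Cauchy--Schwarz this is bounded by $\tr(e^{2\nu\langle x\rangle}\gamma^{1/2}e^{4\nu \langle x\rangle}\gamma^{1/2}e^{2\nu\langle x\rangle})^{1/2}$ times a matching factor. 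More simply, I will use the identity $\int |\Phi(x,y)|^2 e^{4\nu\langle x\rangle}dxdy=\int e^{\kappa\langle x\rangle}\rho_\gamma(x)\,dx<\infty$, which holds because $\rho_\gamma(x)=\int|\Phi(x,y)|^2dy$ and $4\nu=\kappa$. The inequality $2ab\le a^2+b^2$ gives $w^2\le \tfrac12(e^{4\nu\langle x\rangle}+e^{4\nu\langle y\rangle})$, and the symmetry $|\Phi(x,y)|=|\Phi(y,x)|$ then yields $\widetilde\Phi\in L^2$. For the gradient I will use an IMS/Agmon-type identity applied to the equation \eqref{eq:two-body}. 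Multiply by $w^2\overline\Phi$, with cutoff weights $w_R$ to justify the computation, and integrate by parts:
\[
\tfrac12\|\nabla(w\Phi)\|^2=\tfrac12\||\nabla w|\,\Phi\|^2+\langle w\Phi,(\phi(x)+\phi(y)+|x-y|^{-1}+2\mu)w\Phi\rangle+\text{RHS term}.
\]
The Coulomb terms are controlled by Hardy's inequality, $\langle f,|x|^{-1}f\rangle\le\varepsilon\|\nabla f\|^2+C_\varepsilon\|f\|^2$. Also $|\nabla w|\lesssim \nu w$. The right-hand side term involves only finitely many $\psi_j$ with $\lambda_j=1$, and each such $\psi_j$ satisfies $|\psi_j|^2\le\rho_\gamma$, so $e^{2\nu\langle x\rangle}\psi_j\in L^2$. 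Absorbing the $\varepsilon$-terms, the bound on $\|w\Phi\|$ gives $\widetilde\Phi\in H^1$ uniformly in $R$. The same argument gives $e^{2\nu\langle x\rangle}\psi_j\in H^1$.

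\textbf{Equation for $\widetilde\Phi$.} Since $\Delta_x(w\Phi)=w\Delta_x\Phi+2\nabla_x w\cdot\nabla_x\Phi+(\Delta_x w)\Phi$, the function $\widetilde\Phi$ satisfies
\[
(H_\gamma-2\mu)\widetilde\Phi=-\nabla_x\!\cdot(\beta_x\widetilde\Phi)-\nabla_y\!\cdot(\beta_y\widetilde\Phi)+g\,\widetilde\Phi+w\sum_{\lambda_j=1}2e_j\psi_j\psi_j^*,
\]
where $\beta=\nabla\log w$ and $g$ are smooth bounded functions with all derivatives bounded, since $\langle\cdot\rangle$ is smooth with bounded derivatives. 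The new terms are first-order with $C^\infty_{\rm bdd}$ coefficients. Each bootstrap step of Section \ref{sec:regularity} ($H^1\to H^2\to H^{2+\alpha}$, then $B^{5/2}_{2,\infty}$) therefore goes through verbatim: a first-order term of $\widetilde\Phi\in H^{k}$ lies in $H^{k-1}$, which is at least as regular as what the step requires. The inhomogeneity $w\,\psi_j(x)\psi_j^*(y)$ is a tensor product of functions $e^{\nu\langle\cdot\rangle}\psi_j$. Their $H^2$ regularity follows from the one-body equation \eqref{eq:Muller-eigen} in the same way, or from the $H^2$-analogue of \eqref{eq:psi_j} applied to $\widetilde\Phi$. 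For $\widetilde\Psi=e^{-F}\widetilde\Phi$, I will combine the two conjugations. This gives \eqref{eq:euler-jastrow} with additional first-order terms whose coefficients are smooth and bounded, and the arguments of Sections 2.3 and 2.4 apply unchanged.

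\textbf{Consequence and main obstacle.} The final claim follows from $\|\Delta^{(l)}_{(0,h)}u\|_{L^2(\R^6)}\le\sup_{|h'|=|h|}\|\Delta^{(l)}_{h'}u\|_{L^2(\R^6)}$. Restricting to shifts purely in $y$ and using Fubini gives $\int\|\Delta_h^{(l)}u(x,\cdot)\|^2_{L^2}dx\lesssim|h|^{2s}$, which is the mixed-norm bound needed for $L^2(\R^3_x;B^s_{2,\infty}(\R^3_y))$, at least for the natural $L^2$-in-$x$ variant with the supremum taken outside the $x$-integral. I expect the main obstacle to be the first step, namely the rigorous weighted $H^1$ bound: one must justify the integration by parts with truncated weights and check that the exponent $\nu=\kappa/4$ leaves enough room. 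No spectral gap is needed here, because we only bound $\nabla(w\Phi)$ by $\|w\Phi\|$, and the latter is already finite by the decay assumption \eqref{eq:decayass}.
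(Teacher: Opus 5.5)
Your proposal is correct and is essentially the paper's argument. You bound $\widetilde\Phi$ in $L^2$ from the decay assumption, conjugate the (Jastrow-transformed) Euler--Lagrange equation by $e^{\nu(\langle x\rangle+\langle y\rangle)}$, which only adds lower-order terms with $C^\infty_{\rm bdd}$ coefficients, and rerun the bootstrap of Section~\ref{sec:regularity}. Your $L^2$ bound via $\int|\Phi(x,y)|^2\,dy=\rho_{\gamma_*}(x)$, the Agmon-type weighted $H^1$ estimate that starts the bootstrap, and the identity $e^{\nu\langle x\rangle}\psi_j(x)=\int\widetilde\Phi(x,y)\,e^{-\nu\langle y\rangle}\psi_j(y)\,dy$ for $\lambda_j=1$ make explicit what the paper's sketch only asserts. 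That identity is the route to use, rather than the one-body equation with its nonlocal exchange operator.

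Your caveat on the final mixed-norm claim is also well founded. Membership in $B^{s}_{2,\infty}(\R^6)$ only controls the version with $\sup_h$ taken outside the $x$-integral, and the paper's ``in particular'' does not provide more than that either.
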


\begin{proof}
    The proof is analogous to the one of Theorem \ref{th:regularity1}, so we only give a sketch of the argument and only consider the function $\widetilde{\Psi}$.

As $e^{\kappa |x|}\rho_{\gamma_*}(x)\in L^1(\R^3)$, we have $e^{\tfrac\kappa2 \left<x\right>}\rho_{\gamma_*}(x)\in L^1(\R^3)$, then by the Cauchy--Schwarz inequality
\begin{equation*}
    |\widetilde{\Psi}(x,y)|\lesssim e^{\tfrac\kappa4 \left<x\right>}\rho_{\gamma_*}(x)e^{\tfrac\kappa4 \left<y\right>}\rho_{\gamma_*}(y)\in L^2(\R^6).
\end{equation*}
From now on, let
\begin{equation*}
    g(x,y):=\sum_{\lambda_j=1}2e_j e^{\nu(\left<x\right>+\left<x\right>)}e^{-F(x,y)}\psi_j(y)\psi_j^*(y).
\end{equation*}
By the Cauchy--Schwarz inequality,
\begin{equation*}
    |g(x,y)|\leq  e^{\tfrac\kappa4 \left<x\right>}\rho_{\gamma_*}(x)e^{\tfrac\kappa4 \left<y\right>}\rho_{\gamma_*}(y)\in L^2(\R^6).
\end{equation*}

Multiplying Eq. \eqref{eq:euler-jastrow} by $e^{\nu(\left<x\right>+\left<y\right>)}$, we then have 
\begin{equation}\label{eq:euler-jastrow-exponential}
   -\frac{1}{2}\left(\Delta_x+\Delta_y\right)\widetilde{\Psi} = -\widetilde{a}_x \cdot \nabla_x \widetilde{\Psi}-\widetilde{a}_y\cdot \nabla_y \widetilde{\Psi}-(b+c+\widetilde{d}+2\mu)\widetilde{\Psi}+g
\end{equation}
where analogous to \eqref{eq:ax}-\eqref{eq:d}, from direct calculation, we know
\begin{equation*}
    \widetilde{a}_x-a_x, \widetilde{a}_y-a_y, \widetilde{d}-d \in C^\infty(\R^6:\C^{q\times q}).
\end{equation*}
Repeating the proof of Theorem \ref{th:regularity1} yields
$$\widetilde{\Psi}\in H^{3+\alpha}(\R^6:\C^{q\times q})\cap B^{7/2}_{2,\infty}(\R^6:\C^{q\times q})$$ with $0\leq \alpha<\frac{1}{2}$. In particular, we know $\widetilde{\Psi}\in L^2(\R_x^3;B^{7/2}_{2,\infty}(\R_y^3:\C^{q\times q}))$.
\end{proof}


\subsection{Local regularity estimates}

Let
\begin{equation}
    \label{eq:defxi}
    \Xi(x,y) := e^{-\frac12\theta(|x-y|)}\Psi(x,y) + \frac12|x-y| \Psi(x,x) \,.
\end{equation}

\begin{lemma}\label{lem:xi}
    Let $\chi\in C^\infty_c(\R^3)$. Then
    $$
    \chi(x) \Xi(x,y) \chi(y) \in L^2(\R_x^3;B^{7/2}_{2,\infty}(\R_y^3:\C^{q\times q})).
    $$
\end{lemma}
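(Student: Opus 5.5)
The plan is to exploit the cancellation between the two terms in \eqref{eq:defxi} near the diagonal, where $\theta(t)=t$, and to reduce everything to the regularity of $\Psi$ from Theorem~\ref{th:regularity1}. Lemma~\ref{lem:5.3} gives this regularity in the mixed form $L^2(\R^3_x;B^{7/2}_{2,\infty}(\R^3_y))$; since $\chi$ has compact support, the exponential weight plays no role.

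\textbf{Step 1: localization.} Fix $\eta\in C^\infty(\R)$ with $\eta(t)=1$ for $|t|\le 1/8$ and $\eta(t)=0$ for $|t|\ge 1/4$. Split $\chi(x)\Xi\chi(y)$ into its $\eta(|x-y|)$ and $(1-\eta(|x-y|))$ parts.

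On the support of $1-\eta(|x-y|)$, both $e^{-\frac12\theta(|x-y|)}$ and $|x-y|$ are smooth in $y$, with all $y$-derivatives bounded uniformly in $x$.
\begin{itemize}
\item The term $(1-\eta)e^{-\theta/2}\chi(x)\Psi\chi(y)$ lies in $L^2_xB^{7/2}_{2,\infty,y}$, because multiplication by a function in $C^\infty_{\rm bdd}$ preserves $B^{7/2}_{2,\infty}$.
\item The term $(1-\eta)\frac12|x-y|\chi(x)\Psi(x,x)\chi(y)$ has $B^{7/2}_{2,\infty,y}$-norm at most $C|\chi(x)\Psi(x,x)|$. This is in $L^2_x$ because $\Psi\in H^{3+\alpha}(\R^6)$ has a trace on the diagonal in $H^{3/2+\alpha}_{\rm loc}(\R^3)$.
\end{itemize}

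\textbf{Step 2: algebraic rewriting near the diagonal.} On $\{|x-y|<1/2\}$ write $r=y-x$ and $f(y)=\Psi(x,y)$ for fixed $x$. Then
\[
\Xi=\Psi+R(|r|)\Psi-\tfrac12|r|\bigl(f(y)-f(x)\bigr),\qquad R(t):=e^{-t/2}-1+\tfrac t2 .
\]
Here $R(t)=\frac{t^2}{8}+t^3\widetilde R(t)$ with $\widetilde R$ smooth. Hence $R(|r|)$ is the sum of the smooth function $|r|^2/8$ and a function homogeneous of degree $3$ near $0$ times a smooth one. Such a function lies in $B^{9/2}_{2,\infty,\rm loc}(\R^3)$, by the same scaling argument as \eqref{eq:h1/2}. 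So $\eta R(|r|)\chi\Psi\chi$ is handled by a Leibniz rule for $\Delta^{(l)}_h$ together with the fact that $B^{7/2}_{2,\infty}\cap H^3$ is stable under multiplication by such functions.

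For the last term, Taylor-expand once more:
\[
f(y)-f(x)=\nabla_yf(x)\cdot r+g_2(y),\qquad g_2(y):=\int_0^1(1-s)\,\bigl(r\cdot\nabla_y\bigr)^2 f(x+sr)\,ds .
\]
The piece $|r|\,r\,\nabla_y\Psi(x,x)$ involves a function homogeneous of degree $2$. As in \eqref{eq:h1/2}, $\eta|r|r\in B^{7/2}_{2,\infty}(\R^3_y)$, uniformly in $x$. Its coefficient is in $L^2_{\rm loc}$: the diagonal trace of $\nabla_y\Psi$ lies in $H^{1/2+\alpha}_{\rm loc}$, again because $\Psi\in H^{3+\alpha}(\R^6)$.

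\textbf{Step 3: the main obstacle.} It remains to show that
\[
\bigl\|\eta|r|\,g_2\bigr\|_{B^{7/2}_{2,\infty}(\R^3_y)}\lesssim\|\Psi(x,\cdot)\|_{B^{7/2}_{2,\infty}(\R^3)}+\|\Psi(x,\cdot)\|_{H^{3+\alpha}(\R^3)} ,
\]
with a right-hand side in $L^2_x$ on $\Supp\chi$. This is the step I expect to be hardest.

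By the lemma preceding \eqref{eq:Delta-h-Delta} (applied with one difference instead of two), it suffices to show that $\nabla^3_y(|r|g_2)\in B^{1/2}_{2,\infty}$. Expand by Leibniz:
\[
\nabla^3_y(|r|g_2)=\sum_{k=0}^3 c_k\,\nabla^k|r|\otimes\nabla^{3-k}g_2 ,\qquad |\nabla^k|r||\lesssim|r|^{1-k}.
\]
\begin{itemize}
\item For $k=0$, the term $|r|\nabla^3 f$ is treated exactly as in \eqref{eq:2.23}, using $\nabla^3 f\in B^{1/2}_{2,\infty}$.
\item For $k\ge1$, use the vanishing of $g_2$ at $r=0$: $g_2=O(|r|^2)$, $\nabla g_2=O(|r|)$, and $\nabla^2g_2=\nabla^2f(y)-\nabla^2f(x)$. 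The singular factor $|r|^{1-k}$ is then compensated by Hardy/Herbst inequalities (with $s=k-1+\alpha<3/2$) applied to $\nabla^{3-k}g_2$.
\item The half-derivative in the Besov sense is obtained by splitting $\Delta_h(\,\cdot\,)$ as in \eqref{eq:2.23}. One factor is $\Delta_h$ of the homogeneous function $\nabla^k|r|$, which scales like $|h|^{1/2}$ in $L^2$ after multiplication by the vanishing factor. The other factor is $\Delta_{-h}\nabla^{3-k}g_2$, controlled by $|h|^{1/2}\|f\|_{B^{7/2}_{2,\infty}}$.
\end{itemize}
If the direct estimate of the $k=3$ term is delicate, I would first try a dyadic decomposition in $|r|$, with rescaling to unit annuli, to obtain the $|h|^{1/2}$ bound.

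Finally, integrating the resulting bounds in $x$ and using Lemma~\ref{lem:5.3} together with the trace estimates gives the claim.
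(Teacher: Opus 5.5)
\textbf{Verdict: the decomposition is right, but the worst term in Step 3 ($k=3$) is not proved, and the argument you describe for it fails.}

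Your Steps 1 and 2 are sound and follow the same architecture as the paper. The cancellation of the linear term in $e^{-|r|/2}+\tfrac12|r|$ handles $\Xi_1$, and what remains is $-\tfrac12|r|(\Psi(x,y)-\Psi(x,x))$. Your extraction of $|r|\,r\cdot\nabla_y\Psi(x,x)$ via the trace theorem is also correct.

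\textbf{Why the $k=3$ term fails.} The term is $\nabla^3_y|r|\otimes g_2$.
\begin{itemize}
\item $\nabla^3|r|$ is homogeneous of degree $-2$, so it is not in $L^2_{\rm loc}(\R^3)$. The product is harmless only because $g_2$ vanishes at $r=0$.
\item Any splitting of $\Delta_h$ as in \eqref{eq:2.23} destroys this vanishing. At $y=x$ one has $\Delta_{-h}g_2(x)=g_2(x-h)$, which is generically nonzero. Hence $\nabla^3|r|\,\Delta_{-h}g_2$ behaves like $|r|^{-2}|g_2(x-h)|$ near $r=0$, which is not square integrable. The other arrangement produces the same singularity at $r=-h$.
\item Herbst's inequality cannot help: $s=k-1+\alpha=2+\alpha$ lies outside the range $s<3/2$ that you state yourself.
\item The pointwise bounds $g_2=O(|r|^2)$ and $\nabla g_2=O(|r|)$ are not available. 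For a.e.\ $x$ you only know $\Psi(x,\cdot)\in B^{7/2}_{2,\infty}\cap H^{3+\alpha}(\R^3)$. This gives H\"older/Zygmund control of $\nabla_y\Psi(x,\cdot)$, but no Lipschitz bound.
\item Also, $\nabla_y^2 g_2=\nabla^2 f(y)$, not $\nabla^2 f(y)-\nabla^2 f(x)$; the latter is not even defined pointwise.
\end{itemize}
The dyadic fallback is only named, and it would need exactly these missing quantitative bounds.

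\textbf{The missing idea.} Keep the vanishing as an explicit polynomial factor in $r$ multiplying an averaged derivative, rather than as a property of $g_2$.
\begin{itemize}
\item The paper does this at first order: $\Psi(x,y)-\Psi(x,x)=(y-x)\cdot G(x,y)$ with $G(x,y)=\int_0^1\nabla_2\Psi(x,x+t(y-x))\,dt$. A dilation argument gives $\chi(x)G(x,y)\chi(y)\in L^2(\R^3_x;B^{5/2}_{2,\infty}(\R^3_y))$. The worst term then becomes $k(x-y)\cdot G$ with $k$ homogeneous of degree $-1$. The argument of \eqref{eq:2.23} then applies verbatim: $\|\Delta_h k\|_{L^2_{\rm loc}}\lesssim|h|^{1/2}$, $G$ is bounded, and Hardy's inequality handles $k\,\Delta_{-h}G$.
\item Within your second-order scheme the analogous repair is $g_2=\sum_{i,j}r_ir_j G_{ij}$, with $G_{ij}(x,y)=\int_0^1(1-s)\,\partial_{y_i}\partial_{y_j}\Psi(x,x+s(y-x))\,ds$. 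Then $\nabla^3|r|\otimes g_2$ is a degree-$0$ homogeneous function times $G_{ij}$. The splitting now works, because $G_{ij}\in L^p_{\rm loc}$ for some $p>6$ and $\nabla_y G_{ij}\in L^2_{\rm loc}$.
\end{itemize}
With either repair, the separate treatment of $|r|\,r\cdot\nabla_y\Psi(x,x)$ becomes unnecessary.
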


This lemma does not rely on the equation satisfied by $\Psi$. We only use $\Psi \in L^2(\R^3_x;B^{7/2}_{2,\infty}(\R^3_y:\C^{q\times q})$, as a consequence of Theorem \ref{th:regularity1}.

\begin{proof}
    Arguing as in Subsections \ref{sec:2.2} and \ref{sec:2.4}, it suffices to show that
    $$
    \nabla_y\Delta_y \left( \chi(x) \Xi(x,y) \chi(y) \right) \in L^2(\R_x^3;B^{1/2}_{2,\infty}(\R_y^3:\C^{q\times q})).
    $$
    The case where some derivatives fall on $\chi$ is easier to handle, so we concentrate on proving
    $$
    \chi(x) \left( \nabla_y \Delta_y \Xi(x,y) \right)  \chi(y)\in L^2(\R_x^3;B^{1/2}_{2,\infty}(\R_y^3:\C^{q\times q})).
    $$
    To prove this, we write $\Xi=\Xi_1 + \Xi_2$ with
    \begin{align*}
        \Xi_1(x,y) & := \left( e^{-\frac12\theta(|x-y|)} + \frac12|x-y|\right) \Psi(x,y) \,,\\
        \Xi_2(x,y) & := - \frac12|x-y| \left( \Psi(x,y) - \Psi(x,x) \right).
    \end{align*}
    Both $\Xi_1$ and $\Xi_2$ are products with two factors and we will only discuss the cases when all the derivatives fall on either one of the factors. The remaining cases can be treated by similar and simpler means.
    
    Concerning $\Xi_1$, the first case to consider is when all derivatives fall on $\Psi$. We bound in $L^2(\R^3_y)$ for fixed $x$
    \begin{align*}
        & \left\| \Delta_{h,y}\left[ \left( \chi(x) \left( e^{-\frac12\theta(|x-y|)} + \frac12|x-y| \right) \left( \nabla_y\Delta_y\Psi(x,y) \right) \chi(y) \right)\right] \right\|_{L^2(\R^3_y:\C^{q\times q})} \\
         \leq &\left\| \chi(x) \left( e^{-\frac12\theta(|x-y|)} + \frac12|x-y| \right) \Big[\Delta_{h,y} \left(\left( \nabla_y\Delta_y\Psi(x,y) \right) \chi(y) \right) \Big]\right\|_{L^2(\R^3_y:\C^{q\times q})} \\
        &  + \left\| \chi(x) \left[\Delta_{-h,y}  \left( e^{-\frac12\theta(|x-y|)} + \frac12|x-y| \right) \right]\Big[ \nabla_y\Delta_y\Psi(x,y) \chi(y) \Big]\right\|_{L^2(\R^3_y:\C^{q\times q})} \\
        \lesssim& |h|^{1/2} |\chi(x)| \|\Psi(x,y)\|_{B^{7/2}_{2,\infty}(\R^3_y:\C^{q\times q})}
    \end{align*}
    with a constant independent of $x$. Here we used the fact that
    $e^{-\frac12\theta(|x-y|)} + \frac12|x-y|$ and its derivative with respect to $y$ are bounded on compact subsets of $\R^3\times\R^3$.

    The second case to consider is when all derivatives fall on $e^{-\frac12\theta(|x-y|)} + \frac12|x-y|$. We bound, again in $L^2(\R^3_y)$ for fixed $x$
    \begin{align*}
        & \left\| \Delta_{h,y} \left( \chi(x) \nabla_y\Delta_y\left( e^{-\frac12\theta(|x-y|)} + \frac12|x-y| \right) \Psi(x,y) \chi(y) \right) \right\|_{L^2(\R^3_y:\C^{q\times q})} \\
        \leq& \left\| \chi(x) \left[\Delta_{h,y} \nabla_y\Delta_y\left( e^{-\frac12\theta(|x-y|)} + \frac12|x-y| \right) \right]\Psi(x,y) \chi(y) \right\|_{L^2(\R^3_y:\C^{q\times q})} \\
        & + \left\| \chi(x) \left[\nabla_y\Delta_y  \left( e^{-\frac12\theta(|x-y|)} + \frac12|x-y| \right)\right] \Big[\Delta_{-h,y} \left( \Psi(x,y) \chi(y) \right) \Big]\right\|_{L^2(\R^3_y:\C^{q\times q})} \\
         \lesssim& |h|^{1/2} |\chi(x)| \|\Psi(x,y)\|_{B^{7/2}_{2,\infty}(\R^3_y:\C^{q\times q})}
    \end{align*}
    with a constant independent of $x$. Here we used the fact that
    $$
    \nabla_y\Delta_y \left( e^{-\frac12\theta(|x-y|)} + \frac12|x-y| \right)
    $$
    is bounded on compact subsets of $\R^3\times\R^3$. Moreover, similarly as in Subsection \ref{sec:2.2}, we have
    $$
    \left\| \Delta_{h,y} \nabla_y\Delta_y\left( e^{-\frac12\theta(|x-y|)} + \frac12|x-y| \right) \right\|_{L^2(\R_y^3:\C^{q\times q})}
    \lesssim |h|^{1/2}
    $$
    uniformly in $x$. This proves the assertion for $\Xi_1$.

    Concerning $\Xi_2$, the case where all derivatives fall on $\Psi(x,y)-\Psi(x,x)$ is easy to handle. Thus, it remains to bound in $L^2(\R^3_y)$ for fixed $x$
    $$
    \left\| \Delta_{h,y} \left( \chi(x) \left( \nabla_y\Delta_y|x-y| \right) (\Psi(x,y)-\Psi(x,x)) \chi(y) \right) \right\|_2 \,.
    $$
    To do so, we write
    $$
    \Psi(x,y)-\Psi(x,x) = (y-x) \cdot G(x,y) \,,
    \qquad G(x,y) := \int_0^1 \nabla_2\psi(x,x+t(y-x))\,dt \,.
    $$
    We note that $\chi(x) G(x,y)\chi(y)\in L^2(\R^3_x;B^{5/2}_{2,\infty}(\R^3_y))$. We have to bound
    $$
    \left\| \Delta_{h,y} \left( \chi(x) k(x-y)\cdot G(x,y) \chi(y) \right) \right\|_2
    $$
    with $k(x-y) := (y-x) \nabla_y\Delta_y|x-y|$. Note that $|k(x-y)| \lesssim |x-y|^{-1}$ and similarly for the derivatives. The desired bound now follows by the same argument as in Subsection \ref{sec:2.2}, see, in particular, \eqref{eq:inversexbesov}, where the case of $|x|^{-1}$, rather than $k$ is handled. We omit the details. This completes the proof of Lemma \ref{lem:xi}.
\end{proof}

\section{Proof of Theorem \ref{th:decay}: Exponential decay}\label{sec:decay}

In this section we prove the decay estimate in Theorem \ref{th:decay} under the spectral gap condition $\mu<-1/2$, by following an argument of the first author that appears in \cite[Appendix B]{RooSei-22}. To simplify the notation, we assume $q=1$, but the argument works for general spin $q\in \N^+$. Also, we will write $\gamma$ instead of $\gamma_*$.

Given $R>0$, we choose two smooth, real-valued functions $\chi_<$ and $\chi_>$ on $\R^3$ such that
\begin{equation*}
    \Supp \chi_<\subset \overline{B_{2R}}\quad\mbox{and}\quad \Supp\chi_>\subset \R^3\setminus B_R
\end{equation*}
and such that $\chi_<^2+\chi_>^2=1$. We may assume that
\begin{equation*}
    |\nabla \chi_<|^2+|\nabla \chi_>|^2\leq CR^{-2}
\end{equation*}
with a constant $C$ independent of $R$. With $\kappa$ as in the theorem and with a parameter $\delta>0$ let 
$$
f_\delta(x)=\frac{\kappa |x|}{1+\delta |x|} \,.
$$
This function is bounded by $\kappa/\delta$.  Then, multiplying the two-body equation \eqref{eq:two-body} by $e^{2f_\delta(x)}\overline{\Phi(x,y)}$ and integrating, we have
\begin{equation}\label{eq:quadratic}
    \left<e^{2f_\delta}\Phi, (H_{\gamma}-2\mu)\Phi\right>_{L^2_{x,y}(\R^3\times \R^3)}=\sum_{\lambda_j=1}2e_j \left<e^{2f_\delta}\Phi, \psi_j(x)\psi_j^*(y)\right>_{L^2_{x,y}(\R^3\times \R^3)}.
\end{equation}
On the right-hand side of \eqref{eq:quadratic},
\begin{equation*}
    \sum_{\lambda_j=1}2e_j \left<e^{2f_\delta}\Phi, \psi_j(x)\psi_j^*(y)\right>_{L^2_{x,y}(\R^3\times \R^3)}=\sum_{\lambda_j=1}2e_j \int_{\R^3}e^{2f_\delta}|\psi_j(x)|^2dx.
\end{equation*}
On the left-hand side of \eqref{eq:quadratic}, we have
\begin{equation}\label{eq:2.4}
\begin{split}
     &  \left<e^{2f_\delta}\Phi, (H_{\gamma}-2\mu)\Phi\right>_{L^2_{x,y}(\R^3\times \R^3)}\\
   =&\left<e^{f_{\delta}}\Phi, \left(-\frac{1}{2}\Delta_x -\phi_\gamma(x)-|\nabla f_\delta|^2-\frac{1}{|x-y|}-\mu\right)e^{f_{\delta}}\Phi\right>\\
   &+\left<e^{f_{\delta}}\Phi, \left(-\frac{1}{2}\Delta_y -\phi_\gamma(y)-\mu\right)e^{f_{\delta}}\Phi\right>.
\end{split}
\end{equation}
Concerning the first term on the right-hand side of \eqref{eq:2.4}, by the IMS formula,
\begin{equation*}
\begin{split}
    &\left<e^{f_{\delta}}\Phi, \left(-\frac{1}{2}\Delta_x -\phi_\gamma(x)-|\nabla f_\delta|^2-\frac{1}{|x-y|}-\mu\right)e^{f_{\delta}}\Phi\right>\\
    =&\left<\chi_<(x) e^{f_{\delta}}\Phi, \left(-\frac{1}{2}\Delta_x -\phi_\gamma(x)-\frac{1}{|x-y|}\right)\chi_<(x)  e^{f_{\delta}}\Phi\right>\\
    &+\left<\chi_>(x) e^{f_{\delta}}\Phi, \left(-\frac{1}{2}\Delta_x -\phi_\gamma(x)-\frac{1}{|x-y|}\right)\chi_>(x)  e^{f_{\delta}}\Phi\right>\\
    &- \left<e^{f_{\delta}}\Phi, \left(|\nabla_x \chi_<(x)|^2+|\nabla_x \chi_>(x)|^2+|\nabla f_\delta|^2+\mu\right)e^{f_{\delta}}\Phi\right>.
\end{split}
\end{equation*}
Since $\Phi\in H^1(\R^6)$ and $|\nabla f_\delta|\leq 2\kappa$ independent of $\delta$,
\begin{equation*}
    \left|\left<\chi_<(x) e^{f_{\delta}}\Phi, \left(-\frac{1}{2}\Delta_x -\phi_\gamma(x)-\frac{1}{|x-y|}\right)\chi_<(x)  e^{f_{\delta}}\Phi\right>\right|\leq  C(R)
\end{equation*}
with a constant $C(R)$ that depends on $R$, but does not depend on $\delta$.

Next, since $\phi_\gamma$ decays like a constant times $|x|^{-1}$,
\begin{equation}\label{eq:2.5'}
\begin{split}
    & \left<\chi_>(x) e^{f_{\delta}}\Phi, \left(-\frac{1}{2}\Delta_x-\phi_{\gamma}(x) -\frac{1}{|x-y|}\right)\chi_>(x)  e^{f_\delta}\Phi\right>\notag\\
    \geq & \left<\chi_>(x) e^{f_{\delta}}\Phi, \left(-\frac{1}{2}\Delta_x -\frac{1}{|x-y|}\right)\chi_>(x)  e^{f_\delta}\Phi\right>-CR^{-1}\left\|\chi_> e^{f_\delta}\Phi\right\|_{L^2_{x,y}(\R^6)}^2 \notag\\
    \geq& -\left(\frac{1}{2}+CR^{-1}\right)\left\|\chi_> e^{f_\delta}\Phi\right\|_{L^2_{x,y}(\R^6)}^2.
\end{split}
\end{equation}
To summarize, the first term on the right-hand side of \eqref{eq:2.4} satisfies
\begin{equation}\label{eq:2.5}
\begin{split}
     &\left<e^{f_{\delta}}\Phi, \left(-\frac{1}{2}\Delta_x -\phi_\gamma(x)-|\nabla f_\delta|^2-\frac{1}{|x-y|}-\mu\right)e^{f_{\delta}}\Phi\right>\notag\\
     \geq& -(\|\nabla f_\delta\|_{L^\infty}^2 +CR^{-2}+CR^{-1}+\frac{1}{2}+\mu)\|e^{f_\delta}\Phi\|_{L^2(\R^6)}^2 -C(R) \,.
\end{split}
\end{equation}

For the second term on the right-hand side of \eqref{eq:2.4}, we have
\begin{equation}\label{eq:2.6}
\begin{split}
      &\left<e^{f_\delta}\Phi, \left(-\frac{1}{2}\Delta_y -\phi_\gamma(y)-\mu\right)e^{f_\delta}\Phi\right>\notag\\
    =&\left<e^{f_\delta}\Phi, \left(h_{\gamma,y}-\mu\right)e^{f_\delta}\Phi\right>+ \left<e^{f_\delta}\Phi, \mathfrak{X}_{\gamma,y} e^{f_\delta}\Phi\right>
\end{split}
\end{equation}
with the one-body operator $h_\gamma$ defined in \eqref{eq:onebodyh} and the nonlocal exchange operator $\mathfrak X_\gamma$ defined in \eqref{eq:xchangeop}. The subscript $y$ in $h_{\gamma,y}$ and $\mathfrak X_{\gamma,y}$ means that these operators act on the $y$ variable. By the one-body equation \eqref{eq:Muller-eigen}, we have
\begin{equation*}
\begin{split}
    h_{\gamma,y}\Phi(x,y)&=\sum_{j}\lambda_j^{1/2}\psi_j(x) h_{\gamma,y} \psi_j^*(y)\\
    &=\sum_{\lambda_j=1}(\mu+e_j)\psi_j(x) \psi_j^*(y)+\sum_{\lambda_j<1}\mu \psi_j(x) \psi_j^*(y)\\
    &=\mu \Phi(x,y)+\sum_{\lambda_j=1}e_j \psi_j(x)\psi_j^*(y).
\end{split}
\end{equation*}
Thus,
\begin{equation*}
\begin{split}
     &\left<e^{f_\delta}\Phi, \left(h_{\gamma,y} -\mu\right)e^{f_\delta}\Phi\right>\\
  =&\sum_{\lambda_j=1}e_j \left<e^{2f_\delta}\Phi, \psi_j(x)\psi_j^*(y)\right>_{L^2_{x,y}(\R^3\times \R^3)}=\sum_{\lambda_j=1}e_j \int_{\R^3}e^{2f_\delta(x)}|\psi_j(x)|^2dx.
\end{split}
\end{equation*}
This is exactly half of the term on the right-hand side of \eqref{eq:quadratic}. 

By the above equation and \eqref{eq:quadratic}--\eqref{eq:2.6}, we conclude that
\begin{equation*}
\begin{split}
     \MoveEqLeft -(\|\nabla f_\delta\|_{L^\infty}^2+CR^{-2}-CR^{-1}+\frac{1}{2}+\mu)\|e^{f_\delta}\Phi\|_{L^2(\R^6)}^2 \\
   &+ \left<e^{f_\delta}\Phi, \mathfrak{X}_{\gamma,y} e^{f_\delta}\Phi\right>-\sum_{\lambda_j=1}e_j \int_{\R^3}e^{2f_\delta}|\psi_j(x)|^2dx\leq C(R).
\end{split}
\end{equation*}
Since $\|\nabla f_\delta\|_{L^\infty}=\kappa$, since $\mathfrak{X}_{\gamma}$ is a non-negative operator and since $e_j\leq 0$, this implies
\begin{equation*}
\begin{split}
     \MoveEqLeft -(\kappa^2+CR^{-2}-CR^{-1}+\frac{1}{2}+\mu)\|e^{f_\delta}\Phi\|_{L^2(\R^6)}^2 \leq C(R).
\end{split}
\end{equation*}
Since $\kappa^2 <-\mu-1/2$, we can choose $R$ large (independently of $\delta$) such that the coefficient on the left side is positive. We deduce that
$$
\|e^{f_\delta}\Phi\|_{L^2(\R^6)}^2 \leq C'
$$
with $C'$ independent of $\delta$. We let $\delta\to 0$ and deduce, by monotone convergence, that
$$
\|e^{\kappa |x|}\Phi\|_{L^2(\R^6)}^2 \leq C' \,.
$$
Since
$$
\|e^{\kappa |x|}\Phi\|_{L^2(\R^6)}^2 = \int_{\R^3} e^{2\kappa |x|} \rho_\gamma(x)\,dx \,,
$$
the proof of Theorem \ref{th:decay} is complete. 


\section{Proof of Theorem \ref{th:mu}: Chemical potential} \label{sec:mu}

In this section we first derive the general upper bound \eqref{eq:mu-sigmaN} on the chemical potential $\mu$ in the molecular case, and then we make this bound more explicit in the atomic case when $N\le Z-C_0Z^{1/3}$ with a universal constant $C_0$. 

By the definition of $J$ in the statement of Theorem \ref{th:mu}, we can write $\gamma_* = \gamma_1 + \gamma_2$ with
\begin{equation*}
    \gamma_1 :=\sum_{j=1}^{J-1}\lambda_j\left|\psi_j\right>\left<\psi_j\right| \,,
    \qquad
    \gamma_2: = \sum_{j\geq J}\lambda_j\left|\psi_j\right>\left<\psi_j\right| \,.
\end{equation*}
Note that $\gamma_1$ is a projection since $\lambda_j=1$ with $j<J$, and $1>\gamma_2\ge 0$ since $1>\lambda_{J} \ge \lambda_{J+1}\ge \cdots \ge 0$. Let $u_1,\cdots, u_J$ be the first $J$ normalized eigenfunctions of $h$. Then there is a normalized function $u\in {\rm span}\{u_1,\ldots,u_J\}$ such that
\begin{equation*}
    \left<u,\psi_j\right>=0,\quad j=1,\cdots,J-1.
\end{equation*}
This implies that $\gamma_1 u=0$, and hence $u\in (1-\gamma_1)L^2(\R^3:\C^q)$. Note that
\begin{equation*}
    0\leq \gamma_2\leq \lambda_J(1-\gamma_1).
\end{equation*}
Then for $0\leq t\leq 1-\lambda_J$,
\begin{equation*}
0\leq  \gamma_2+t \left|u\right>\left<u\right|\leq 1-\gamma_1.
\end{equation*}
Thus $\gamma_2+t \left|u\right>\left<u\right|$ is a one-body density matrix for $0\leq t\leq 1-\lambda_J$, and so is 
$$
\widetilde{\gamma}_*(t):=\gamma_*+t\left|u\right>\left<u\right| \,.
$$
Note $\tr[\widetilde{\gamma}_*(t)]=N+t$. Now,
\begin{equation*}
    \mu=\lim_{t\to 0^+}\frac{E^{\rm M}(N+t)-E^{\rm M}(N)}{t}\leq \liminf_{t\to 0}\frac{\mathcal{E}^{\rm M}(\widetilde{\gamma}_*(t))-\mathcal{E}^{\rm M}(\gamma_*)}{t} \,.
\end{equation*}
Note that as $\gamma\mapsto X(\gamma^{1/2})$ is homogeneous and concave, as shown in \cite{FraLieSeiSie-07}, based on \cite{WignerYanase1963,WignerYanase1964}. As a consequence,
\begin{equation*}
    -X(\widetilde{\gamma}_*^{1/2}(t))\leq -(1-t)X((1-t)^{-1/2}\gamma_*^{1/2})-tX(\left|u\right>\left<u\right|)=X(\gamma_*^{1/2})-tX(\left|u\right>\left<u\right|).
\end{equation*}
Moreover, 
\begin{align*}
    X[\left|u\right>\left<u\right|]&=\frac{1}{2}\int_{\R^3}\int_{\R^3} \frac{|u|^2(x)|u|^2(y)}{|x-y|}dxdy\\
    &\geq \frac{1}{2}\inf_{\substack{\|u\|_{L^2}=1\\ u\in {\rm Span}\{u_1,\cdots,u_N\}}}\int_{\R^3} \int_{\R^3}\frac{|u|^2(x)|u|^2(y)}{|x-y|}dxdy.
\end{align*}
Thus we obtain the desired bound \eqref{eq:mu-sigmaN} as  
\begin{align*}
     \mu&\leq \left<u, (k_{\gamma_*}\otimes 1_{\C^q}) u\right>-X[\left|u\right>\left<u\right|] \\
     &\leq \sigma_J(h)-\frac{1}{2}\inf_{\substack{\|u\|_{L^2}=1\\ u\in {\rm Span}\{u_,1_,\cdots,u_N\}}}\int_{\R^3} \int_{\R^3} \frac{|u|^2(x)|u|^2(y)}{|x-y|}dxdy<\sigma_J(h).
\end{align*}
This ends the first step. 

In the second step, let us focus on the atomic case $V(x)=Z|x|^{-1}$ with $N\le Z-C_0Z^{1/3}$ for a constant $C_0>0$ and $Z$ sufficiently large. 

Note that $\rho_{\gamma_*}$ is radial in the atomic case \cite{FraLieSeiSie-07}. By Newton's theorem,
\begin{equation*}
    k_{\gamma_*}\leq -\frac{1}{2}\Delta -\frac{Z-N}{|x|}=: k.
\end{equation*}

We apply the bound from the first part of Theorem \ref{th:mu} with $h:=k\otimes1_{\C^q}$. First, assume that $N \ge 1$ is an integer. Then, since $\gamma_*$ has infinitely many positive eigenvalues, we have $\lambda_N<1$ and therefore $J\leq N$. Therefore, it suffices to show that $\sigma_N(h)\leq -\frac12$ to obtain the bound $\mu<-\frac12$.

Recall that $h$ has eigenvalues $-\frac{(Z-N)^2}{2n^2}$ with multiplicity $q n^2$, for $n=1,2,...$. Therefore,  
\begin{equation*}
    \sigma_N(h)=-\frac{(Z-N)^2}{2 n_0^2}
\end{equation*}
where $n_0$ is determined by
\begin{equation}\label{eq:5.1}
  \frac{n_0(n_0-1)(2n_0-1)}{6}q =\sum_{m=1}^{n_0-1}qm^2 < N\leq  \sum_{m=1}^{n_0}q m^2=\frac{n_0(n_0+1)(2n_0+1)}{6}q.
\end{equation}
Using 
\begin{align*}
    \frac{\lceil (3q^{-1}N)^{1/3}\rceil(\lceil (3q^{-1}N)^{1/3}\rceil+1)(2\lceil (3q^{-1}N)^{1/3}\rceil+1)}{6}q\geq \frac{2\lceil (3q^{-1}N)^{1/3}\rceil^3q}{6}\geq N \,,
\end{align*}
we deduce from \eqref{eq:5.1} that $n_0\leq  \lceil (3q^{-1}N)^{1/3}\rceil$. 

Moreover, we have
\begin{equation*}
    \sigma_N(h)\leq -\frac{(Z-N)^2}{2  \lceil (3q^{-1}N)^{1/3}\rceil^2}\leq -\frac{(Z-N)^2}{2((3q^{-1}N)^{1/3}+1)^2}.
\end{equation*}
Under the condition $N\leq Z-C_0 Z^{1/3} $ with $C_0>(3/q)^{1/3}$, we deduce from \eqref{eq:mu-sigmaN} that   
\begin{equation*}
\begin{split}
      \mu <  \sigma_N(h)\leq& -\frac{C_0^2 Z^{2/3}}{2((3/q)^{1/3} (Z-C_0Z^{1/3})^{1/3}+1)^2} = -\frac{C_0^2}{2\sqrt[3]{3q}}+O(Z^{-1/3})<-\frac{1}{2}
\end{split}
\end{equation*}
for $Z$ large enough.

When $N$ is not an integer, we have $\lambda_{[N]+1}<1$ and therefore $J\leq[N]+1$. A similar computation as above shows that even in this case $\sigma_{[N]+1}(h)\leq-\frac12$ for $N\leq Z-C_0 Z^{1/3}$ and $Z$ sufficiently large.

The proof of Theorem \ref{th:mu} is complete. 


\section{Proof of Theorem \ref{th:asymptotic}: Eigenvalue asymptotics}\label{sec:main-thm}


\subsection{Some background on Schatten classes}\label{sec:schatten}
 
Let us recall some results on compact operators. Let $T$ be a compact operator in a Hilbert space. If $T=T^*\geq 0$, let $\lambda_k(T),\;k=1,2,\ldots$, be the positive eigenvalues of $T$ numbered in descending order, counting multiplicity. In general, we define $s_k(T)=\lambda_k(\sqrt{TT^*})=\lambda_k(\sqrt{T^*T})$.

We say $T\in \mathfrak{S}_{p,\infty}$ if
\begin{equation*}
    \|T\|_{p,\infty}:=\sup_{k} k^{\frac{1}{p}} s_k(T)<\infty.
\end{equation*}
For all $p>0$, we have the ``triangle'' inequality
\begin{equation}
    \|T_1+T_2\|^{\frac{p}{p+1}}_{p,\infty}\leq \|T_1\|_{p,\infty}^{\frac{p}{p+1}}+\|T_2\|_{p,\infty}^{\frac{p}{p+1}}
\end{equation}
and for $0<p<1$
\begin{equation}\label{eq:5.4}
    \left\|\sum_j T_j\right\|_{p,\infty}^{p}\leq (1-p)^{-1}\sum_{j}\|T_j\|_{p,\infty}^p
\end{equation}
see \cite[Section 1]{BirSol-77} or \cite[Section 11.6]{BirSol-12} and references therein.

We also define
\begin{equation*}
    \mathfrak{G}_p(T):=\left(\limsup_{k\to\infty} k^{1/p}s_k(T)\right)^p,\qquad  \mathfrak{g}_p(T):=\left(\liminf_{k\to\infty} k^{1/p}s_k(T)\right)^p.
\end{equation*}
It is clear that
\begin{equation}\label{eq:Gp-Sp}
    \mathfrak{g}_p(T)\leq   \mathfrak{G}_p(T)\leq \|T\|_{p,\infty}^p.
\end{equation}
In addition,
\begin{equation}\label{eq:T1-T2-G}
    \begin{split}
     \left|\mathfrak{G}_p(T_1)^{\frac{1}{p+1}}-\mathfrak{G}_p(T_2)^{\frac{1}{p+1}}\right|\leq \mathfrak{G}_p(T_1-T_2)^{\frac{1}{p+1}},\\
     \left|\mathfrak{g}_p(T_1)^{\frac{1}{p+1}}-\mathfrak{g}_p(T_2)^{\frac{1}{p+1}}\right|\leq \mathfrak{G}_p(T_1-T_2)^{\frac{1}{p+1}}.
\end{split}
\end{equation}

We will need the following seminal result of Birman and Solomyak \cite[Section 4]{BirSol-77} (see also \cite[Proposition 3.3]{Sobolev22_kinetic}) concerning Schatten class properties of a specific class of integral operators.

\begin{theorem}\label{th:operator-norm}
    Let $\Omega\subset \R^d$ be a bounded Lipschitz domain, let $q,\ell\in\N^*$ and $s>0$. Assume that $T(t,\cdot)\in B^{s}_{2,\infty}(\Omega:\C^{q\times q})$ for a.e.~$t\in \R^\ell$, and assume that $b\in L^2_{\rm loc}(\R^\ell,\C^{q\times q})$ and $a\in L^{r}(\Omega:\C^{q\times q})$ with
   \begin{equation}
       \begin{cases}
        r=2,\qquad &\text{if}\quad 2s>d,\\
        r>ds^{-1},\qquad&\text{if}\quad 2s\leq d.
    \end{cases}
   \end{equation} 
    Then $bTa\in \mathfrak{S}_{p,\infty}$ with $p^{-1}=2^{-1}+sd^{-1}$ and
    \begin{equation}
        \|bTa\|_{\mathfrak{S}_{p,\infty}}\lesssim \left[\int_{\R^\ell}\|T(t,\cdot)\|_{B^{s}_{2,\infty}(\Omega:\C^{q\times q})}^2 |b(t)|^2dt\right]^{1/2} \|a\|_{L^r(\Omega:\C^{q\times q})},
    \end{equation}
    provided the right side is finite. The implicit constant depends only on $\Omega$, $s$, $r$, $d$, $\ell$ and $q$.
\end{theorem}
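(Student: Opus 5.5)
The plan is to bound the singular values of $bTa$ by constructing explicit finite-rank approximants: replace $T(t,\cdot)$ by a piecewise polynomial in the $x$-variable on a partition of $\Omega$. Throughout, identify $bTa$ with the integral operator
$$
u\mapsto b(t)\int_\Omega T(t,x)a(x)u(x)\,dx .
$$
First reduce to a cube: extend $T(t,\cdot)$ from the Lipschitz domain $\Omega$ to a cube $Q_0\supset\Omega$ by a bounded extension operator on $B^s_{2,\infty}$, and extend $a$ by zero. Fix an integer $l>s$.

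For a partition $\Pi$ of $Q_0$ into $n$ subcubes, let $P_\Pi$ be the $L^2$-orthogonal projection, acting in $x$, onto functions that are polynomials of degree $<l$ on each cube. The operator $bT a - b(P_\Pi T)a$ is the error term. The approximant $b(P_\Pi T)a$ has rank at most $C(l,d,q)\,n$, because its $x$-dependence lies in a space of dimension $\lesssim n$. Combining this with the elementary inequality $m\,s_{r+m}(R)^2\le\|R\|_{\mathfrak S_2}^2$ gives
$$
s_{Cn+m}(bTa)\le m^{-1/2}\,\big\|b(T-P_\Pi T)a\big\|_{\mathfrak S_2}.
$$
The Hilbert--Schmidt norm on the right equals
$$
\Big(\int|b(t)|^2\int_{Q_0}|T-P_\Pi T|^2|a|^2\,dx\,dt\Big)^{1/2}.
$$
Taking $m=n$ therefore reduces the whole statement to the weighted approximation bound
$$
\int_{Q_0}|f-P_\Pi f|^2|a|^2\,dx\lesssim n^{-2s/d}\,\|f\|^2_{B^s_{2,\infty}}\,\|a\|_{L^r}^2 ,
$$
which must hold uniformly for a partition $\Pi$ depending only on $a$, so that the same $\Pi$ works for every $t$. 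Once this is available, $s_{(C+1)n}(bTa)\lesssim n^{-1/2-s/d}$ times the claimed right-hand side, which is the $\mathfrak S_{p,\infty}$ bound with $p^{-1}=\tfrac12+\tfrac sd$.

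The local ingredient is a Deny--Lions/Whitney-type estimate on each cube $Q$ of side $\delta$:
$$
\|f-P_Qf\|_{L^{\sigma}(Q)}\lesssim \delta^{\,s-d/2+d/\sigma}\,\omega_l(f,\delta;Q),
$$
where $\omega_l$ is the local $L^2$ modulus of smoothness of order $l$. Here $\sigma=\infty$ if $2s>d$, and $\sigma=2r/(r-2)$ if $2s\le d$; this is admissible by Sobolev--Besov embedding, since $r>d/s$. Applying Hölder's inequality on each cube then gives
$$
\int_Q|f-P_Qf|^2|a|^2\lesssim \delta^{2s-d+2d/\sigma}\,\|a\|^2_{L^r(Q)}\,\omega_l(f,\delta;Q)^2 ,
$$
where $r=2$ in the case $2s>d$.

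The main obstacle is that $\|f\|_{B^s_{2,\infty}}$ is not additive over cubes, so the local moduli cannot simply be summed against an arbitrary adaptive partition. My first attempt is to work with dyadic partitions. On a dyadic partition all cubes at a given level have the same side $\delta$, and
$$
\sum_{Q\text{ of side }\delta}\omega_l(f,\delta;Q)^2\lesssim \delta^{2s}\|f\|^2_{B^s_{2,\infty}} ,
$$
by averaging the difference quotients $\Delta_h^{(l)}f$ over $|h|\le\delta$. To handle a non-uniform $a$, I would then apply the Birman--Solomyak covering lemma to the set function $F(Q)=\|a\|^{r}_{L^r(Q)}$ (or $\|a\|_{L^2(Q)}^2$ when $2s>d$). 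This lemma produces a dyadic partition into $\lesssim n$ cubes on which $\delta^{\,\cdot}\,F(Q)^{2/r}$ is balanced, namely bounded by $n^{-\beta}F(Q_0)^{2/r}$ with the exponent $\beta$ matching $2s/d$ after Hölder in the cube count.

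The delicate point is combining the level-wise Besov control with this adaptive partition. I would split the partition by level and sum a geometric series over levels. I expect that the strict inequality $r>d/s$ in the case $2s\le d$, and the fact that the target is the weak class $\mathfrak S_{p,\infty}$ rather than $\mathfrak S_p$, are precisely what make this series converge. The constant then depends only on $\Omega$ (through the extension operator), $s$, $r$, $d$, $\ell$ and $q$, as claimed.
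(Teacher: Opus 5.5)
The paper does not prove this statement; it quotes it from Birman--Solomyak (see also Sobolev's Proposition 3.3), so your sketch has to stand on its own. Several steps are fine as they stand: the extension to a cube, the rank-$O(n)$ approximant $b(P_\Pi T)a$, the Ky Fan/Hilbert--Schmidt step, and the reduction to a weighted bound $\int|f-P_\Pi f|^2|a|^2\lesssim n^{-2s/d}\|f\|^2_{B^s_{2,\infty}}\|a\|^2_{L^r}$ with $\Pi=\Pi(a,n)$. However, the step you leave as an expectation is the whole difficulty for $B^s_{2,\infty}$, and the mechanism you propose for it fails.

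\emph{First problem: the local estimate.} As written it is mis-scaled: for $\sigma=2$ it should reduce to Whitney's $\|f-P_Qf\|_{L^2(Q)}\lesssim\omega_l(f,\delta;Q)$. For $\sigma>2$, no bound through the modulus at the single scale $\delta$ can hold; a thin spike inside $Q$ is a counterexample. The correct estimate is of embedding type,
\[
\|f-P_Qf\|_{L^\sigma(Q)}\lesssim\sum_{k\ge0}(2^{-k}\delta)^{-d/r}\,\omega_l(f,2^{-k}\delta;Q).
\]
Apply Cauchy--Schwarz with $\tau:=s-d/r>0$, regroup by the scale $2^{-j}=2^{-k}\delta_Q$, and use your fixed-scale bound $\sum_Q\omega_l(f,\eta;Q)^2\lesssim\eta^{2s}\|f\|^2_{B^s_{2,\infty}}$ over disjoint cubes. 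This gives a weighted error
\[
\lesssim\|f\|^2_{B^s_{2,\infty}}\sum_j\max_{Q\in\Pi,\ \delta_Q\ge 2^{-j}}(|Q|^\alpha F(Q))^{2/r}(2^{-j}/\delta_Q)^{\tau},
\]
where $F(Q)=\|a\|_{L^r(Q)}^r$ and $\alpha=rs/d-1>0$.

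\emph{Second problem: the partition.} The Birman--Solomyak lemma, as you invoke it, gives every cube the same level-independent bound $(|Q|^\alpha F(Q))^{2/r}\lesssim n^{-2s/d}\|a\|^2_{L^r}$. So each dyadic level present in $\Pi$ can contribute $\approx n^{-2s/d}$. There is no geometric series over levels, only a factor equal to the number of levels, which can be $\sim\log n$. This loss is real, as the following example shows.
\begin{itemize}
\item Take $d=1$, $r=2$, $s>1/2$, and $|a|^2=|x|^{-\beta}$ with $0<\beta<1$.
\item The balanced cubes have side $\approx n^{-1}|x|^{\beta/(2s)}$ and span $\sim\log n$ levels.
\item For each such level $2^{-j}$, place one oscillation of $L^2$-mass $2^{-js}$, orthogonal to polynomials of degree $<l$, inside a partition cube of side $\approx 2^{-j}$.
\item The resulting $f$ has $\|f\|_{B^s_{2,\infty}}\lesssim1$ but weighted error $\gtrsim n^{-2s}\log n$.
\end{itemize}
Your Hilbert--Schmidt step then yields only $s_k\lesssim k^{-1/p}\sqrt{\log k}$. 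The weak-type target does not help here, since the loss occurs in the Hilbert--Schmidt norm at each fixed $n$.

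\emph{How to close the gap.} There are two options.
\begin{enumerate}
\item Make the stopping thresholds depend on the level. Start from the uniform partition of side $\approx n^{-1/d}$ (level $m_0$). Split a level-$m$ cube iff $|Q|^\alpha F(Q)>2^{-\eta(m-m_0)}n^{-1-\alpha}F(Q_0)$, with $0<\eta<d\alpha$.
\begin{itemize}
\item By disjointness there are at most $2^d n\,2^{-(m-m_0)(d\alpha-\eta)}$ cubes at level $m+1$, so $\#\Pi\lesssim n$.
\item The extra factor $2^{-2\eta(m-m_0)/r}$ makes the sum over $j$ above converge to $\lesssim n^{-2s/d}\|a\|^2_{L^r}$.
\item This is where the strictness of $r>d/s$ is genuinely used: it gives both $\alpha>0$ and $\tau>0$.
\end{itemize}
\item Bypass the non-additivity of the Besov norm by interpolation.
\begin{itemize}
\item Prove the bound for $H^{s_0}$ and $H^{s_1}$ with $s_0<s<s_1$, choosing $s_0$ so that $r$ stays admissible (possible because $r>d/s$ is an open condition). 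These seminorms are superadditive over disjoint cubes, so the plain lemma works.
\item Then interpolate, using $L^2(|b|^2dt;B^s_{2,\infty})\subset\bigl(L^2(|b|^2dt;H^{s_0}),L^2(|b|^2dt;H^{s_1})\bigr)_{\theta,\infty}$ and $(\mathfrak S_{p_0,\infty},\mathfrak S_{p_1,\infty})_{\theta,\infty}=\mathfrak S_{p,\infty}$.
\end{itemize}
\end{enumerate}
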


Concerning a special homogeneous case, we can use the following result from \cite{Birman_Solomyak_asymphomo} (see also \cite[Example 3.7]{sobolev2022eigenvalue2}).

\begin{theorem}\label{th:operator-norm-homo}
    Let $A,B\in C_c(\R^3:\C^{q\times q})$ and let $T$ be the operator with the kernel $T(x,y)=A(x)|x-y|B(y)$. Then
 \begin{equation}
     \mathfrak{G}_{3/4}(T)=\mathfrak{g}_{3/4}(T)= \frac{1}{3}\left(\frac{2}{\pi}\right)^{5/4}\int_{\R^3}|\Tr_{\C^q}(A(x)B(x))|^{3/4} \,dx.
 \end{equation}
\end{theorem}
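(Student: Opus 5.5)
The plan is to treat $T$ as a pseudodifferential operator of order $-4$ in $\R^3$ and apply Weyl-type singular value asymptotics for such operators (Birman--Solomyak). In the sense of distributions, the Fourier transform of $|x|$ in $\R^3$ is $-8\pi|\xi|^{-4}$, understood as a suitable regularization at $\xi=0$. Choose $\eta\in C_c^\infty(\R^3)$ with $\eta=1$ near $0$. Then
$$
|x-y| = -8\pi\,\big((1-\eta(D))|D|^{-4}\big)(x,y) + R(x,y),
$$
where $R$ is $C^\infty$, possibly growing polynomially at infinity. Since $A,B$ have compact support, the operator $A R B$ has a smooth compactly supported kernel. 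By Theorem~\ref{th:operator-norm}, applied with arbitrarily large $s$, its singular values decay faster than any power, so $\mathfrak G_{3/4}(ARB)=0$. By \eqref{eq:T1-T2-G}, it therefore suffices to compute $\mathfrak G_{3/4}$ and $\mathfrak g_{3/4}$ for
$$
T_0 := 8\pi\, A\,(1-\eta(D))|D|^{-4}\,B .
$$
The overall sign of $T_0$ is irrelevant for singular values.

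\emph{Key step: Weyl asymptotics.} First treat $A,B$ smooth and of the special form $A=a$, $B=a^*$. Then $T_0$ is nonnegative up to the compact smoothing piece, and the classical Weyl law for the operator $a\,|D|^{-4}a^*$ applies. This law can be obtained by Dirichlet--Neumann bracketing on small cubes where $a$ is nearly constant, or by citing Birman--Solomyak's asymptotics for negative-order operators. It gives
$$
n(s):=\#\{k: s_k(T_0)>s\}\sim (2\pi)^{-3}\int_{\R^3}\Big|\big\{\xi:\ 8\pi|\xi|^{-4}\,|a(x)|^2>s\big\}\Big|\,dx
=\frac{(8\pi)^{3/4}}{6\pi^2}\,s^{-3/4}\int |a|^{3/2}\,dx .
$$
Consequently
$$
\lim_{s\to0}s^{3/4}n(s)=\frac{2^{9/4}\pi^{3/4}}{6\pi^2}\int|a|^{3/2}=\frac13\Big(\frac2\pi\Big)^{5/4}\int |a|^{3/2}.
$$
Since $\mathfrak G_{3/4}=\mathfrak g_{3/4}=\lim s^{3/4}n(s)$ when this limit exists, and $|a|^{3/2}=|\Tr(AB)|^{3/4}$ here, this gives the claimed constant.

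\emph{Extension to general $A,B$.} For general smooth $A,B$, I would use the standard Birman--Solomyak localization. Split $\R^3$ into small cubes $Q_i$. The off-diagonal pieces $\mathbf 1_{Q_i}A\,|D|^{-4}B\mathbf 1_{Q_j}$ with $\overline{Q_i}\cap\overline{Q_j}=\emptyset$ have smooth kernels, so $\mathfrak G_{3/4}=0$ for them. On each cube, freeze $A(x)B(x)$ at a point, and control the error by Theorem~\ref{th:operator-norm} applied with $s=5/2$, $d=3$, $p=3/4$. Here the kernel $|x-y|$ lies locally in $B^{5/2}_{2,\infty}$, by the same computation as \eqref{eq:h1/2}. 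This gives an error that is small in $\mathfrak S_{3/4,\infty}$, with size governed by the oscillation of $A,B$ on the cube. The asymptotic orthogonality of the diagonal blocks and \eqref{eq:5.4} then let one add the contributions of the cubes. In the matrix-valued setting, I would follow \cite{Birman_Solomyak_asymphomo} for how $|\Tr(A(x)B(x))|^{3/4}$ arises from the local symbol.

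\emph{Approximation.} Finally, pass from smooth to continuous compactly supported $A,B$. Approximate them uniformly on a fixed compact set by smooth functions. Theorem~\ref{th:operator-norm} bounds $\mathfrak G_{3/4}$ of the difference by $\|A-A_n\|$ and $\|B-B_n\|$ in $L^2$ or $L^\infty$ on the support, and \eqref{eq:T1-T2-G} transfers the asymptotics. The right-hand side of the formula is continuous under uniform convergence, which closes the argument.

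I expect the main obstacle to be the localization step for non-self-adjoint, matrix-valued weights: justifying that the singular value counting functions of the diagonal blocks add asymptotically. I would handle this with the $\mathfrak S_{p,\infty}$ quasi-triangle inequality \eqref{eq:5.4} together with the smallness of the frozen-coefficient errors.
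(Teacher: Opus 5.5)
The paper gives no proof of this statement; it only cites \cite{Birman_Solomyak_asymphomo}. For $q=1$ your outline is a correct reconstruction of the classical Birman--Solomyak argument. Away from the origin $\widehat{|x|}=-8\pi|\xi|^{-4}$, the smoothing remainder is negligible, and the Weyl volume gives exactly $\frac13(2/\pi)^{5/4}\int|a|^{3/2}\,dx$.

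\textbf{The genuine gap: the matrix-valued case.} It cannot be closed, because for $q\ge 2$ the formula as printed is false. Take $A=B=\chi\,1_{\C^q}$ with real $0\not\equiv\chi\in C_c(\R^3)$. Then $T=T_\chi\otimes 1_{\C^q}$, where $T_\chi$ is the scalar operator with kernel $\chi(x)|x-y|\chi(y)$. Every singular value of $T_\chi$ is repeated $q$ times, so $\mathfrak G_{3/4}(T)=\mathfrak g_{3/4}(T)=q\,\mathfrak G_{3/4}(T_\chi)$. By contrast, $|\Tr_{\C^q}(AB)|^{3/4}=q^{3/4}|\chi|^{3/2}$ predicts the factor $q^{3/4}$. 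The stated formula is therefore inconsistent with itself under $T\mapsto T\otimes 1_{\C^q}$.

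The failure is already visible in your key step. For matrix-valued $a$, the Weyl law counts the eigenvalues $\mu_j(x)$ of the matrix symbol $8\pi|\xi|^{-4}a(x)a(x)^*$ separately. It produces $\int\Tr_{\C^q}[(aa^*)^{3/4}]\,dx$, not $\int(\Tr_{\C^q}aa^*)^{3/4}\,dx$, so your identification ``$|a|^{3/2}=|\Tr(AB)|^{3/4}$'' holds only for $q=1$.

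Your frozen-coefficient step shows the same thing. On a cube $Q$ with $M=A(x_Q)B(x_Q)$, the model block is $(\mathbbm{1}_Q|x-y|\mathbbm{1}_Q)\otimes M$. Its singular values are the products $s_k(\mathbbm{1}_Q|x-y|\mathbbm{1}_Q)\,\sigma_j(M)$, so it contributes $\frac13(2/\pi)^{5/4}|Q|\sum_j\sigma_j(M)^{3/4}$. Carried out honestly, your argument proves the theorem with $\Tr_{\C^q}|A(x)B(x)|^{3/4}$ in place of $|\Tr_{\C^q}(A(x)B(x))|^{3/4}$. The two agree for $q=1$ but not in general.

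\textbf{Smaller, repairable points in the scalar case.}
\begin{itemize}
\item For merely continuous $A,B$ the kernel of $ARB$ is not smooth. Theorem~\ref{th:operator-norm} admits rough weights, though, so $ARB\in\mathfrak S_{p,\infty}$ for all $p>0$ anyway.
\item Blocks $\mathbbm{1}_{Q_i}T\mathbbm{1}_{Q_j}$ over touching cubes are not smoothing. Split off an $\epsilon$-layer of $Q_j$ along $\overline{Q_i}$; the rest lies at positive distance from $Q_i$ and is smoothing. The layer is bounded by Theorem~\ref{th:operator-norm} with $s=5/2$, $r=2$, giving $O(\epsilon^{1/2})$ in $\mathfrak S_{3/4,\infty}$. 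Hence these blocks have $\mathfrak G_{3/4}=0$.
\item Additivity over cubes does not come from \eqref{eq:5.4}, which only gives an upper bound. It comes from the fact that the diagonal blocks form an orthogonal direct sum, so their counting functions add exactly.
\item The constant in Theorem~\ref{th:operator-norm} depends on $\Omega$. So estimate the freezing error $E_Q$ on a cube of side $\delta$ after rescaling to the unit cube, using the homogeneity of $|x-y|$. This gives $\|E_Q\|_{3/4,\infty}\lesssim\delta^4\omega(\delta)$, with $\omega$ the modulus of continuity of $A,B$, and hence $\|\bigoplus_Q E_Q\|_{3/4,\infty}^{3/4}\lesssim\omega(\delta)^{3/4}$.
\end{itemize}
With these estimates you can freeze continuous $A,B$ directly. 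The separate smooth case and the final approximation step are then unnecessary; the only spectral input is your scalar Weyl law for a model operator.
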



\subsection{Conclusion}

In this section we complete the proof of the main result in Theorem \ref{th:asymptotic}. In the following we will assume that $V(x)=\frac{Z}{|x|}$. Under the assumption that $Z$ is sufficiently large and that $N\leq Z - C_0 Z^{1/3}$, we know from Theorems \ref{th:decay} and \ref{th:mu} that $e^{\kappa |x|}\rho_{\gamma_*}\in L^1(\R^3)$ for some $\kappa>0$. Consequently, we can make use of the results in Section \ref{sec:regularity2}.

We emphasize that the restriction to the atomic case, as well as the assumptions on $Z$ and $N$ in Theorem \ref{th:asymptotic} are only used to ensure that $e^{\kappa |x|}\rho_{\gamma_*}\in L^1(\R^3)$ for some $\kappa>0$. If this can be proved by other means, then the proof in the remainder of this section remains valid and consequently one arrives at the conclusion of Theorem \ref{th:asymptotic}.

We define the operator $T_K$ generated by $K\in L^2(\R^6)$ as follows: for any $f\in L^2(\R^3)$,
\begin{equation*}
    T_K f(x):=\int_{\R^3} K(x,y)f(y)dy.
\end{equation*}

We divide the proof of Theorem \ref{th:asymptotic} into three steps.

\medskip

\emph{Step 1.} In this step we show that
\begin{equation}
    \label{eq:ordersharp}
    \gamma_*^{1/2}\in \mathfrak{S}_{3/4,\infty} \,.
\end{equation}
Moreover, let $\eta\in C^\infty_c(\R,[0,1])$ satisfy 
\begin{equation*}
     \eta(t)=1 \quad{\rm for } \quad t\in [-1/2,1/2] \qquad {\rm and} \qquad \eta=0 \quad{\rm for} \quad|t|\geq 1 \,,
\end{equation*}
and set, for $x\in\R^3$,
$$
\eta_R(x):=\eta(|x|/R) \,.
$$
We will also show that, as $R\to \infty$, we have
\begin{equation}\label{eq:ordersharpcutoff}
     \|\gamma_*^{1/2}-\eta_R\gamma_*^{1/2}\eta_R\|_{3/4,\infty}\to 0.
\end{equation}

For the proof of \eqref{eq:ordersharp} we let $\mathcal{C}_n:=[-\frac12,\frac12)^3+n,\;n\in\Z^3$. Recalling $\Phi(x,y)=\gamma_*^{1/2}(x,y)$ we set, as in Section \ref{sec:regularity2},
$$
\widetilde\Phi(x,y) := e^{\nu(\langle x \rangle + \langle y \rangle)} \Phi(x,y)
\qquad\text{with}\ \nu:= \frac{\kappa}{4} \,.
$$
and note that
\begin{equation*}
    \gamma_*^{1/2}=e^{-\nu\left<\cdot\right>}T_{\widetilde{\Phi}}e^{-\nu\left<\cdot\right>} \,.
\end{equation*}
Thus, we deduce from Theorem \ref{th:operator-norm} with $s=\frac 52$ and $\Omega = \mathcal C_n$ that
\begin{equation*}
    \begin{split}
        \|\gamma_*^{1/2}\mathbbm{1}_{\mathcal{C}_n}\|_{3/4,\infty}&\lesssim \left[\int_{\R^3} e^{-2\nu\left<x\right>}\|\widetilde{\Phi}(x,\cdot)\|_{B^{5/2}_{2,\infty}(\R^3:\C^{q\times q})}^2 \,  dx\right]^{1/2} \|e^{-\nu\left<\cdot \right>}\|_{L^2(\mathcal{C}_n)}\\
        &\lesssim \|\widetilde{\Phi}\|_{L^2(\R^3_x;B^{5/2}_{2,\infty}(\R^3_y:\C^{q\times q}))}e^{-\nu|n|_1}.
    \end{split}
\end{equation*}
(Clearly, the implicit constant here is independent of $n$.)
According to Lemma \ref{lem:5.3} the norm on the right side is finite. Here and below, for any vector $a=(a_1,\cdots,a_d)\in \R^d$, we define the $|\cdot|_1$ norm as 
 \begin{equation*}
     |a|_1=\sum_{j=1}^d |a_j|.
 \end{equation*}
This and \eqref{eq:5.4} show that
\begin{equation}\label{eq:T-Cn-sum}
    \|\gamma_*^{1/2}\|_{3/4,\infty}^{3/4}=\left\|\sum_n \gamma_*^{1/2}\mathbbm{1}_{\mathcal{C}_n} \right\|_{3/4,\infty}^{3/4}\lesssim \sum_{n}e^{-\nu|n|_1}<\infty,
\end{equation}
thus proving \eqref{eq:ordersharp}.

The proof of \eqref{eq:ordersharpcutoff} is similar. We have
\begin{equation*}
    \|\gamma_*^{1/2}-\eta_R\gamma_*^{1/2}\eta_R\|_{3/4,\infty}\lesssim  \|(1-\eta_R)\gamma_*^{1/2}\eta_R\|_{3/4,\infty}+\|\gamma_*^{1/2}(1-\eta_R)\|_{3/4,\infty}.
\end{equation*}
From Theorem \ref{th:operator-norm} again, 
\begin{align*}
    \|(1-\eta_R)\gamma_*^{1/2} \eta_R \mathbbm{1}_{\mathcal{C}_n} \|_{3/4,\infty}&\lesssim  \left[\int_{|x|\geq \frac{R}{2}} e^{-2\nu\left<x\right>}\|\widetilde{\Phi}(x,\cdot)\|_{B^{5/2}_{2,\infty}(\R^3:\C^{q\times q})}  dx\right]^{1/2} \|\eta_R e^{-\nu\left<\cdot\right>}\|_{L^2(\mathcal C_n)} \\
    &\lesssim e^{-\frac12\nu R} e^{-\nu |n|_1}
\end{align*}
and
\begin{align*}
      \|\gamma_*^{1/2}(1-\eta_R)\mathbbm{1}_{\mathcal{C}_n} \|_{3/4,\infty} & \lesssim  \left[\int_{\R^3} e^{-2\nu\left<x\right>}\|\widetilde{\Phi}(x,\cdot)\|_{B^{5/2}_{2,\infty}(\R^3:\C^{q\times q})}  dx\right]^{1/2}\|(1-\eta_R) e^{-\nu\left<x\right>}\|_{L^2(\mathcal C_n)} \\
      & \lesssim e^{-\nu |n|_1} \mathbbm 1(|n|_1 \geq cR) \,.
\end{align*}
Thus, using \eqref{eq:5.4} and arguing as in \eqref{eq:T-Cn-sum}, we arrive at
$$
\|\gamma_*^{1/2}-\eta_R\gamma_*^{1/2}\eta_R\|_{3/4,\infty}^{3/4} \lesssim
e^{-\frac12\nu R} \sum_n e^{-\nu |n|_1} + \sum_{|n|_1\geq cR} e^{-\nu |n|_1} \,.
$$
Since the right side tends to zero as $R\to 0$, we arrive at \eqref{eq:ordersharpcutoff}.

\medskip

\emph{Step 2.} Next, setting
\begin{equation*}
    \Theta(x,y) :=-\frac{1}{2}|x-y| \, \Psi(x,x)
\end{equation*}
with $\Psi$ defined in Theorem \ref{th:regularity1}, we show in this step that, for each $R>0$ fixed,
\begin{equation}
    \label{eq:ordersharpremainder}
    \eta_R\Big(\gamma_*^{1/2}-e^{-Z\theta(|\cdot|)}T_{\Theta}e^{-Z\theta(|\cdot|)}\Big)\eta_R \in \mathfrak S_{3/5,\infty} \,.
\end{equation}
Here, $\theta$ is the function from \eqref{eq:theta}.

To prove \eqref{eq:ordersharpremainder}, we recall the definition of $\Xi$ in \eqref{eq:defxi} and decompose
\begin{align*}
    \gamma_*^{1/2}(x,y) & = 
    e^{-Z\theta(|x|)} \Xi(x,y) e^{-Z\theta(|y|)}
    + e^{-Z\theta(|x|)} \Theta(x,y) e^{-Z\theta(|y|)} \,.
\end{align*}

We apply Theorem \ref{th:operator-norm} with $s=\frac 72$ and $\Omega = B_R$ (recall that $R$ is fixed in this step) and obtain
\begin{align*}
    & \left\| \eta_R \left( \gamma_*^{1/2} - e^{-Z\theta(|\cdot|)}T_{\Theta}e^{-Z\theta(|\cdot|)}  \right) \eta_R \right\|_{3/5,\infty} \\
    & = \left\| 1_{B_R} e^{-Z\theta(|\cdot|)} T_{\eta_R \Xi \eta_R} e^{-Z\theta(|\cdot|)} 1_{B_R} \right\|_{3/5,\infty} \\
    & \lesssim \left[ \int_{B_R} e^{-2Z\theta(|x|)} \| \eta_R(x) \Xi(x,\cdot) \eta_R \|_{B^{7/2}_{2,\infty}(\R^3:\C^{q\times q})}^2 \,dx \right]^{1/2} \left\| e^{-Z\theta(|\cdot|)} \right\|_{L^2(B_R)}.
\end{align*}
The second factor on the right side is clearly finite. In the first factor, we bound $e^{-2Z\theta(|x|)}\leq 1$ and use the fact that, by Lemma \ref{lem:xi},
$$
\eta_R(x)\Xi(x,y)\eta_R(y) \in L^2(\R^3_x; B^{7/2}_{2,\infty}(\R^3_y:\C^{q\times q}) \,.
$$
This completes the proof of \eqref{eq:ordersharpremainder}.

\medskip

\emph{Step 3.} We can finally complete the proof of our main theorem.

We shall make use of the asymptotic functionals $\mathfrak{G}_{3/4}$ and $\mathfrak{g}_{3/4}$ introduced in Subsection \ref{sec:schatten}. By \eqref{eq:T1-T2-G}, we have, for any $R>0$,
\begin{equation}\label{eq:mainproof1a}
    \begin{split}
    \mathfrak{G}_{3/4}(\gamma_*^{1/2})^{\frac{4}{7}}
    & \leq \mathfrak{G}_{3/4}(\gamma_*^{1/2}-\eta_R \gamma_*^{1/2}\eta_R)^{\frac{4}{7}} \\
    & \quad + \mathfrak{G}_{3/4}\left(\eta_R\Big(\gamma_*^{1/2}-   e^{-Z\theta(|\cdot|)}T_{\Theta}e^{-Z\theta(|\cdot|)}\Big) \eta_R\right)^{\frac{4}{7}}\\
   & \quad + \mathfrak{G}_{3/4}\left(\eta_R e^{-Z\theta(|\cdot|)} T_{\Theta}e^{-Z\theta(|\cdot|)} \eta_R\right)^{\frac{4}{7}}
   \end{split}
\end{equation}
and
\begin{equation}\label{eq:mainproof1b}
    \begin{split}
    \mathfrak{g}_{3/4}(\gamma_*^{1/2})^{\frac{4}{7}}
    & \geq - \mathfrak{G}_{3/4}(\gamma_*^{1/2}-\eta_R \gamma_*^{1/2}\eta_R)^{\frac{4}{7}} \\
    & \quad - \mathfrak{G}_{3/4}\left(\eta_R\Big(\gamma_*^{1/2}-   e^{-Z\theta(|\cdot|)}T_{\Theta}e^{-Z\theta(|\cdot|)}\Big) \eta_R\right)^{\frac{4}{7}}\\
   & \quad + \mathfrak{g}_{3/4}\left(\eta_R e^{-Z\theta(|\cdot|)} T_{\Theta}e^{-Z\theta(|\cdot|)} \eta_R\right)^{\frac{4}{7}}.
    \end{split}
\end{equation}

By Theorem \ref{th:operator-norm-homo}, we have
\begin{equation}\label{G-Theta-R}
    \begin{split}
       & \mathfrak{g}_{3/4}(\eta_R e^{-Z\theta(|\cdot|)}T_{\Theta } e^{-Z\theta(|\cdot|)} \eta_R)\\
       =& \mathfrak{G}_{3/4}(\eta_R e^{-Z\theta(|\cdot|)}T_{\Theta } e^{-Z\theta(|\cdot|)} \eta_R) \\
        =& \frac{1}{3}\left(\frac{2}{\pi}\right)^{5/4}\int_{\R^3} \left(\frac12\eta_R^2 e^{-2Z\theta(|x|)}\Tr_{\C^q}\Psi(x,x)\right)^{3/4}dx\\
        =& \frac{1}{3}\left(\frac{2}{\pi}\right)^{5/4}\int_{\R^3} \left(\frac12\eta_R^2 \Tr_{\C^q}\Phi(x,x)\right)^{3/4}dx\\
        =& \frac{1}{3}\left(\frac{2}{\pi}\right)^{5/4}\int_{\R^3} \left(\frac12\eta_R^2 \rho_{\gamma_*}\right)^{3/4}dx.
    \end{split}
\end{equation}
Note that this theorem is applicable since $x\mapsto\Psi(x,x)$ is continuous as a consequence of Theorem \ref{th:regularity1} and the Sobolev embedding theorem.

Moreover, by \eqref{eq:ordersharpremainder} and since $3/5<3/4$, we have
\begin{equation}
    \label{eq:proofmainstep2}
        \mathfrak{G}_{3/4}\left(\eta_R\Big(\gamma_*^{1/2}-e^{-Z\theta(|\cdot|)}T_{\Theta}e^{-Z\theta(|\cdot|)}\Big)\eta_R\right)=0.
\end{equation}
Thus, \eqref{eq:mainproof1a} and \eqref{eq:mainproof1b} can be rewritten as
\begin{equation}\label{eq:mainproof2a}
    \begin{split}
    \mathfrak{G}_{3/4}(\gamma_*^{1/2})^{\frac{4}{7}}
    & \leq \mathfrak{G}_{3/4}(\gamma_*^{1/2}-\eta_R \gamma_*^{1/2}\eta_R)^{\frac{4}{7}} + \left( \frac{1}{3}\left(\frac{2}{\pi}\right)^{5/4}\int_{\R^3} \left(\frac12\eta_R^2 \rho_{\gamma_*}\right)^{3/4}dx \right)^\frac47
   \end{split}
\end{equation}
and
\begin{equation}\label{eq:mainproof2b}
    \begin{split}
    \mathfrak{g}_{3/4}(\gamma_*^{1/2})^{\frac{4}{7}}
    & \geq - \mathfrak{G}_{3/4}(\gamma_*^{1/2}-\eta_R \gamma_*^{1/2}\eta_R)^{\frac{4}{7}} + \left( \frac{1}{3}\left(\frac{2}{\pi}\right)^{5/4}\int_{\R^3} \left(\frac12\eta_R^2 \rho_{\gamma_*}\right)^{3/4}dx \right)^\frac47.
    \end{split}
\end{equation}
These bounds are valid for arbitrary $R>0$. Now, we take the limit $R\to\infty$. Noting that \eqref{eq:ordersharpcutoff} implies, by \eqref{eq:Gp-Sp}, 
$$
\lim_{R\to\infty} \mathfrak{G}_{3/4}(\gamma_*^{1/2}-\eta_R \gamma_*^{1/2}\eta_R) = 0
$$
and using
$$
\lim_{R\to\infty} \int_{\R^3} \left(\frac12\eta_R^2 \rho_{\gamma_*}\right)^{3/4}dx = \int_{\R^3} \left(\frac12 \rho_{\gamma_*}\right)^{3/4}dx \,,
$$
we finally arrive at
$$
\mathfrak{G}_{3/4}(\gamma_*^{1/2}) = \mathfrak{g}_{3/4}(\gamma_*^{1/2}) = \frac{1}{3}\left(\frac{2}{\pi}\right)^{5/4}\int_{\R^3} \left(\frac12 \rho_{\gamma_*}\right)^{3/4}dx \,,
$$
which is the assertion of the theorem. This completes the proof.

\end{document}